\documentclass[11pt]{article}
\usepackage{libertine}
\usepackage{fullpage}
\usepackage{microtype}
\usepackage{subcaption}

\usepackage{amsmath,amssymb,amsthm,mathtools,thmtools,thm-restate}

\usepackage[breaklinks=true]{hyperref}
\usepackage[svgnames]{xcolor}
\usepackage[capitalise,nameinlink]{cleveref}
\hypersetup{colorlinks={true},linkcolor={DarkBlue},citecolor=[named]{DarkGreen}}

\usepackage{natbib}

\usepackage{algorithm}
\usepackage{algpseudocode}
\usepackage{float}
\usepackage[shortlabels]{enumitem}
\usepackage{tikz}
\usetikzlibrary{backgrounds,patterns}
\usepackage{array}

\newtheorem{theorem}{Theorem}
\newtheorem{lemma}{Lemma}

\newtheorem{observation}{Observation}

\theoremstyle{definition}

\usepackage[svgnames]{xcolor}

\usepackage{cleveref}

\newcommand{\MMS}{\text{MMS}}

\usepackage{authblk}

\title{\bf Maximin share allocations for two types of items}

\author[1]{Yu Chen}
\affil[1]{National University of Singapore}
\author[2]{Kai Jie Wong\thanks{Corresponding author. Email: kaijie.wong@rutgers.edu. This work was conducted while the author was an undergraduate at the National University of Singapore.}}
\affil[2]{Rutgers University}
\author[1]{Xiaorui Zhang}

\date{}

\begin{document}
\maketitle

\begin{abstract}
We study the fair allocation of indivisible items among \(n\) agents with additive valuations (or costs) using the fairness notion of maximin share (MMS). Since MMS allocations do not always exist when \(n > 2,\) one direction in the literature has been to identify special classes of instances in which exact MMS allocations can be guaranteed. We show that, when items can be divided into two types of chores, MMS allocations always exist and can be computed in polynomial time using a novel geometric approach inspired by prior work in job scheduling. We further discuss how our technique extends to instances with two types of goods, as well as mixed instances consisting of one type of objective good and one type of objective chore.
\end{abstract}

\section{Introduction}

The fair allocation of indivisible items among agents with differing preferences arises naturally in many settings and has attracted significant interest across computer science and multi-agent systems \citep{aziz_developments_2020}; see also recent surveys by \cite{10.1016/j.artint.2023.103965, liu2024mixed}. The goal is to find a \textit{fair} partition of a set \(M\) of \(m\) items among a set \(N\) of \(n\) agents, where each agent \(i\in N\) has some valuation function \(v_i(\cdot)\) over subsets of \(M.\) In this work, we assume that agents have additive preferences, such that the utility of a bundle \(S\) is given by \(v_i(S)=\sum_{o\in S}v_i(o),\) where \(v_i(o)\) represents the utility of item \(o.\) An item \(o\) can be considered to be a \textit{good} (in which case \(v_i(o)>0,\) or a \textit{chore} (\(v_i(o)<0\)). 

One natural and widely-studied fairness notion in this setting is the \textit{maximin share} (MMS) value. Introduced by \cite{budish2011} in the context of course allocation, an agent's MMS value is the maximum utility they can guarantee for themselves, if they partition the items into \(n\) bundles, then receive the worst one. Formally, let \(\Pi\) denote the set of all partitions of \(M\) into \(n\) bundles. Let \((X_i)_{i\in N}\in \Pi\) be any such partition. Then, we write the \(\MMS\) value for agent \(i\in N\) as
\[
\MMS_i:= \max_{(X_i)_{i\in N}\in \Pi}\min _{j\in N}v_i(X_j).
\]
An allocation is said to be MMS if every agent receives a bundle worth at least their MMS value. 

The MMS is a compelling benchmark for fairness, with evidence from real-world experiments indicating that participants tend to prefer it \citep{gates_helpful_2020}. However, despite its intuitive appeal, exact MMS allocations may not always exist, even when there are only \(n\geq 3\) agents with additive valuations \citep{kurokawa_fair_2018,feige_tight_2021}.

In view of the difficulty of attaining MMS allocations for general additive valuations, we turn our attention to the special case when there are at most two types of items. We assume that the item set \(M\) can be partitioned into two ``types" \(A\) and \(B.\) Items of the same type are identical, so a fixed agent will derive the same utility from items of the same type. A motivating example, in the case of chores, could be a group of housemates fairly splitting up \(|A|\) cooking chores, and \(|B|\) cleaning chores.  Another plausible scenario is to divide chores according to when they are performed: the chores in \(A\) are assigned for this week, while those in \(B\) are assigned for the following week. Since housemates may have different preferences over when they would rather do chores, assigning a chore to one week rather than the other can impose different levels of disutility on different housemates.

At the same time, we consider the more general case of mixed instances, where items may be either goods or chores. For our setting, we assume that the items are either \textit{objective goods} (which yield positive utility for all agents) or \textit{objective chores} (which yield negative utility for all agents), as formalized by Aziz et al. \cite{aziz_fair_2021}.

\subsection{Contributions}

We give a polynomial-time algorithm for computing MMS allocations for two-item-type instances in each of the following settings:
\begin{itemize}
    \item at most two types of chores
    \item at most two types of goods
    \item exactly one type of \textit{objective good} and one type of \textit{objective chore} (a ``mixed" instance).
\end{itemize}

In the case of chore allocation, it is well-known that when agents are identical, the MMS allocation problem is equivalent to the classic makespan minimization in job scheduling \citep{huang_reduction_2023}. Building on this connection, our approach is based on a geometric perspective, inspired by prior work in job scheduling when there are only two types of jobs \citep{mccormick_polynomial_1997}. 

A similar geometric viewpoint was also used by \cite{gorantla_fair_2023} to obtain EFX (envy-free up to any good) allocations for two types of goods. However, their techniques differ significantly from ours due to the different nature of EFX and MMS.

Since there are few settings in which MMS allocations are known to exist, these results provide new positive cases for exact MMS computability. A notable feature of our approach is that essentially the same underlying technique applies across goods, chores, and mixed instances.

We first present the algorithm setting proof for the chore setting in Sections~\ref{sec:overview} and~\ref{section:main-result}. In Section~\ref{section:Extension}, we show how the same ideas extend naturally to goods and mixed instances using analogous arguments.

\paragraph{Independent work.} The two-item-types setting can also be viewed as a natural subclass of \textit{personalised bivalued} instances\footnote{An instance is \textit{personalised bivalued} if \(v_i(o)\in \{a_i,b_i\}\) for all \(o\in M.\)}. In the development of our work, the best-known approximation factor for chores was \(\frac{15}{13}\), due to \cite{garg_improved_2025} and existence remained unknown. Independently of our work, \citet{zhang2026exactmmsallocationspersonalized} recently proved the existence and polynomial-time computability of exact MMS allocations for personalized bivalued valuations, both when all items are goods and when all items are chores. These results generalize our corresponding results for goods-only and chores-only instances. To the best of our knowledge, Zhang's results do not address mixed instances containing both objective goods and objective chores, which we handle in our mixed-items result.

\subsection{Further related work}

\paragraph{Few types of items.} Several works have studied this setting in the case of envy-based fairness, under additive valuations. \cite{aziz_fair_2023} provide a polynomial-time algorithm for EFX allocations when there are only two types of chores. \cite{gorantla_fair_2023} give a polynomial-time algorithm for EFX allocations where there are only two types of goods, as well as existence of envy-free allocations for arbitrary number of types of goods, provided each type appears in sufficiently large quantity. 

\paragraph{Other structured instances.} For the case of bivalued instances\footnote{An instance is \textit{bivalued} if \(v_i(o)\in \{a,b\}\) for all \(o\in M.\)}, MMS allocations are known to exist \cite{feige_maximin_2022}, for both goods and chores. MMS allocations are also possible under ternary valuations (agent values in \(\{0,1,2\}\)) for goods \citep{amanatidis2017}, factored instances for chores \citep{garg_improved_2025}, and weakly lexicographic valuations for both goods and chores \citep{ebadian2022}. There is also a line of work showing how MMS allocations can be computed in polynomial time when agents have binary submodular valuations for goods \cite{barman_existence_2021,viswanathan_general_2025}, or binary supermodular costs for chores \cite{barman_fair_2023}.

\paragraph{Approximations.} Since MMS allocations are not guaranteed in general additive instances, the study of approximate MMS allocations has also seen significant progress. For goods, the first $2/3$-MMS guarantee by \cite{kurokawa_fair_2018} was improved to $3/4$ by \cite{journals/mor/GhodsiHSSY21}. The first constant factor improvement to $3/4+3/3836$ was made by \cite{conf/soda/AkramiG24}, followed by $10/13$ \citep{conf/soda/HeidariKSS26}, and most recently $7/9$ by \cite{huang2025fptas79approximationmaximinshare}.

For chores, \cite{aziz2017} show a $4/3$-approximation in polynomial time. The existence guarantee was subsequently improved to $11/9$ \citep{huang_algorithmic_2021}, and finally $13/11$, with a $(13/11+\epsilon)$-approximation in polynomial time \citep{huang_reduction_2023} via an algorithmic reduction to job scheduling.

\section{Preliminaries}\label{section:Preliminaries}

Let \(M\) be a set of \(m\) indivisible chores, and \(N\) be a set of \(n\) agents. In this section and Section~\ref{section:main-result}, we focus on chore allocation. Since all items yield negative utility, it is more convenient to work with positive disutilities instead. Accordingly, for each agent \(i\in N\) we let \(d_i:M\to \mathbb{R}_{> 0}\) denote an additive disutility function. We consider the setting where the chore set can be partitioned as \(M=A\cup B\) where for each \(i\in N\), if \(c\in A\) then \(d_i(c):=\alpha_i\) for some \(\alpha_i\in\mathbb{R}_{> 0},\) and if \(c\in B\) then \(d_i(c):=\beta_i\) for some \(\beta_i\in\mathbb{R}_{> 0}.\) If \(c\in A,\) we say that \(c\) is a Type A chore; otherwise it is a Type B chore. Since chores of the same type are identical, any two-chore-type fair division instance can be succinctly described by  \(\mathcal{I}:=(a,b,N)\) where \(a,b\) are the number of Type A and Type B chores respectively. 

\paragraph{Maximin share.} Recall \(\Pi\) is the set of all partitions of \(M\) into \(n\) bundles. For an instance of chore allocation, we instead define the MMS share as a positive value
\[
\MMS_i(M) := \min_{{(X_i)_{i\in N}} \in \Pi} \max_{j \in N} d_i(X_j),
\]
thus an allocation is said to MMS if every agent receives a bundle of disutility at most their MMS value. We may also write \(\MMS_i(\mathcal{I})\) to mean the MMS value for agent \(i\) given the instance \(\mathcal{I},\) and just \(\MMS_i\) when the instance is obvious. We say that a bundle \(B\) is \textit{MMS-feasible} for agent \(i\) if \(d_i(B)\leq \MMS_i.\) 

\section{Technique overview}\label{sec:overview}

Our approach is based on a geometric view of two-type allocations, building on the polynomial-time algorithm of~\cite{mccormick_polynomial_1997} for scheduling with only two job types. Given a makespan threshold \(T,\) their algorithm determines in polynomial time whether there exists a schedule with makespan at most \(T\), and, if so, outputs such a schedule. Consequently, binary search over \(T\) yields the minimum makespan in polynomial time. For chore allocation with identical agents, this is equivalent to computing an MMS allocation:

\begin{lemma}[\cite{mccormick_polynomial_1997}]\label{lemma:mms-identical}
    When agents are identical and there are only two types of chores, an MMS allocation can be computed in polynomial time.
\end{lemma}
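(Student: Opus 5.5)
When all agents are identical with disutilities $\alpha,\beta$, every agent has the same MMS value, namely the minimum makespan $T^*$ of scheduling $a$ jobs of length $\alpha$ and $b$ jobs of length $\beta$ on $n$ identical machines. So an MMS allocation is exactly a partition of the chores into $n$ bundles, each of disutility at most $T^*$; any such partition may be handed out arbitrarily. The plan is to compute $T^*$ and an optimal partition directly, by a simple counting argument.

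\textbf{Step 1 (reduction to bundle types).} Since chores of the same type are identical, a bundle is determined by its pair $(x,y)$, with $x$ Type A and $y$ Type B chores, and its cost is $x\alpha+y\beta$. For a threshold $T$, call $(x,y)$ \emph{feasible} if $x\alpha+y\beta\le T$. Every feasible bundle with $x\le a$ can be enlarged in $y$ until it is maximal. So a schedule with makespan at most $T$ exists if and only if there are $n$ values $x_1,\dots,x_n\ge 0$ with $\sum_j x_j\ge a$ and $\sum_j \lfloor (T-x_j\alpha)/\beta\rfloor \ge b$. Overshoot is harmless, because removing chores only lowers costs.

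\textbf{Step 2 (deciding feasibility for fixed $T$).} For fixed $T$, let $g(x)=\lfloor (T-x\alpha)/\beta\rfloor$ for $0\le x\le \lfloor T/\alpha\rfloor$. This is the maximum number of Type B chores a bundle can hold alongside $x$ Type A chores. We want to choose $x_1,\dots,x_n$ maximizing $\sum_j g(x_j)$ subject to $\sum_j x_j=a$, with each $x_j\le a$. This is a resource-allocation problem over $n$ identical machines, solvable by dynamic programming in $O(na\cdot\min(a,T/\alpha))$ time. To keep the running time polynomial in the input size $\log a+\log b+\log n$, I instead follow McCormick et al.'s geometric argument: the lattice points under the line $x\alpha+y\beta=T$ have a convex-hull structure, and an optimal solution uses at most two distinct hull vertices, with multiplicities found by solving a small integer system. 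If $\sum_j x_j$ is forced above $a$, we cap it at $a$.

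\textbf{Step 3 (finding $T^*$).} The optimum $T^*$ lies in the finite candidate set $\{x\alpha+y\beta : 0\le x\le a,\ 0\le y\le b\}$. It can also be located by binary search on $T$ using the monotone feasibility test from Step 2, over this candidate set or via parametric search. Running the test at $T^*$ returns the bundle pairs $(x_j,y_j)$; after trimming overshoot, we assign these bundles to the agents.

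The main obstacle is Step 2 with a genuinely polynomial (not pseudo-polynomial) bound, because $a$, $b$, and $n$ may be exponential in the input size. Since the statement is credited to \cite{mccormick_polynomial_1997}, I would rely on their result for this step, and argue here only the equivalence between identical-agent MMS and makespan minimization that the paper's overview notes.
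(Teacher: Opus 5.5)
Your overall route matches the paper's. With identical agents, the common MMS value is the optimal makespan $T^*$. McCormick et al.\ decide in polynomial time, with a witness schedule, whether makespan at most $T$ is achievable, and a search over $T$ then yields $T^*$ and the allocation. Since you explicitly rest Step~2 on their result, the argument is correct as a citation-based proof, exactly as in the paper.

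However, your paraphrase of their geometric argument in Step~2 is false and should be dropped or replaced. An optimal solution need not be built from at most two distinct hull vertices, and an algorithm that assumed so would reject feasible thresholds. Take $\alpha=5$, $\beta=3$, $a=7$, $b=9$, $n=4$ and $T=16$, which is $T^*$ here. The integer hull of the lattice points under $5x+3y=16$ has vertices $(0,0),(0,5),(2,2),(3,0)$. The only decompositions of $(7,9)$ into four feasible bundles are:
\begin{itemize}
\item $(0,5)+2\cdot(2,2)+(3,0)$, which needs three distinct vertices;
\item $(1,3)+3\cdot(2,2)$, which uses the non-vertex $(1,3)$.
\end{itemize}
The correct structure is Lemma~\ref{lemma:IDP}: $(a/n,b/n)$ lies in a unimodular triangle whose three vertices are lattice points of the region, not necessarily hull vertices. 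The multiplicities are integral by Cramer's rule because the vertex matrix has determinant $\pm 1$, and Lemma~\ref{lemma:alg-time} finds the triangle efficiently.

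Finally, whether your dynamic program is good enough depends on the encoding. If the $m$ chores are listed explicitly, as is standard in fair division, then $a,b\le m$. The DP combined with a search over the $O(ab)$ candidate thresholds $x\alpha+y\beta$ is then already a self-contained polynomial algorithm that needs no citation. McCormick et al.\ are only required for the succinct encoding $(a,b,N)$ with $a,b$ in binary. In that encoding, searching your candidate set would need an implicit selection step, because the set is exponentially large. For integral $\alpha,\beta$, a plain binary search over the integers in $[0,a\alpha+b\beta]$ suffices instead.
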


 With heterogeneous agents, however, each agent now has a different disutility function and hence different MMS thresholds. This prevents the scheduling algorithm from being applied as is. We instead leverage key geometric lemmas underlying their algorithm, together with their algorithm as a subroutine to handle heterogeneous agents. Our procedure goes beyond a direct adaptation of the scheduling techniques, as making use of them for heterogeneous agents is nontrivial. We first introduce the geometric formulation. 

\paragraph{Geometric formulation.} When there are only two types of chores, any bundle \(S\) assigned to an agent can be represented by an integral pair \((x,y)\in\mathbb{Z}_{\geq 0}^2\) where \(x\) is the number of Type A chores and \(y\) is the number of Type B so that the disutility of the bundle for agent \(i\) is \(d_i(S)=\alpha_ix+\beta_iy\). Observe that the points \((x_1,y_1),(x_2,y_2),\dots,(x_n,y_n)
\in\mathbb{Z}_{\geq 0}^2\) correspond to a complete allocation if \(\sum_{i=1}^n(x_i,y_i)=(a,b),\) i.e. the allocation consists of nonnegative integral bundles which uses up all \(a\) Type A chores and all \(b\) Type \(B\) chores. From now on we may use the terms ``integer point" and ``bundle" interchangeably. 

For each \(i\in N,\) define \(L_i\) to be the line \(\alpha_ix+\beta_iy=\MMS_i\). The set of (possibly fractional) MMS-feasible bundles for agent \(i\) is exactly the nonnegative region \(R_i\) defined by
\[
R_i:=\{(x,y)\in \mathbb{R}^2_{\geq 0}: \alpha_ix+\beta_iy\leq \MMS_i\},
\]
which is also known as the \textit{fractional knapsack polytope}. Let the convex hull of integral points (the ``integer hull") in \(R_i\) be \(IH_i:=\text{conv}(R_i\cap\mathbb{Z}^2).\) This is the corresponding \textit{knapsack polytope}. See Fig~\ref{fig:obs-P}, where the red shaded region is \(IH_1,\) the integer hull of \(R_1,\) which is the nonnegative region under \(L_1.\)

\paragraph{Feasibility point.} Given a two-chore-type instance \(\mathcal{I}=(a,b,N)\), we call the point \(P_\mathcal{I}:=(\frac{a}{n},\frac{b}{n})\) the \textit{feasibility point} of the instance. We write `\(+\)' as regular vector addition of points, and for any point \(T,\) we write \(x(T)\) and \(y(T)\) to be its \(x\)- and \(y\)-coordinates respectively. We have the following useful observation.

\begin{restatable}{observation}{obsP}\label{obs:P}
    \(P_\mathcal{I}=(\frac{a}{n},\frac{b}{n})\in \cap_{i=1}^n IH_i\) for each \(i\in N.\)
\end{restatable}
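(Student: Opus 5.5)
Fix an agent \(i\in N\) and consider an MMS partition for agent \(i\): a partition \((X_1,\dots,X_n)\in\Pi\) with \(\max_j d_i(X_j)=\MMS_i\). Such a partition exists because \(\Pi\) is finite, so the minimum defining \(\MMS_i\) is attained. Each bundle \(X_j\) corresponds to an integer point \((x_j,y_j)\in\mathbb{Z}^2_{\geq 0}\), where \(x_j\) counts its Type A chores and \(y_j\) its Type B chores. Since the \(X_j\) partition \(M\), we have \(\sum_{j=1}^n (x_j,y_j)=(a,b)\).

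The plan is to show that each of these points lies in \(R_i\cap\mathbb{Z}^2\), and then to write \(P_\mathcal{I}\) as a convex combination of them.

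\textbf{Membership in \(R_i\).} By the choice of partition, \(d_i(X_j)=\alpha_i x_j+\beta_i y_j\leq \MMS_i\) for every \(j\). Each point is also nonnegative and integral. Hence \((x_j,y_j)\in R_i\cap\mathbb{Z}^2\) for all \(j\in N\).

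\textbf{Convex combination.} Take the uniform weights \(\frac1n\):
\[
\sum_{j=1}^n \tfrac{1}{n}(x_j,y_j)=\tfrac{1}{n}(a,b)=P_\mathcal{I}.
\]
The weights are nonnegative and sum to \(1\), so this is a convex combination of points of \(R_i\cap\mathbb{Z}^2\). Therefore \(P_\mathcal{I}\in \text{conv}(R_i\cap\mathbb{Z}^2)=IH_i\).

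\textbf{Conclusion.} The agent \(i\) was arbitrary, so \(P_\mathcal{I}\in IH_i\) for every \(i\in N\), and hence \(P_\mathcal{I}\in\bigcap_{i=1}^n IH_i\).

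The only point needing care is that the argument uses an \emph{integral} partition witnessing \(\MMS_i\), rather than a fractional one. This is what gives membership in the integer hull \(IH_i\) and not merely in \(R_i\). It is exactly the content of the definition of \(\MMS_i\), so I do not expect a real obstacle.
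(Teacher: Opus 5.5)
Your proposal is correct and follows the paper's own argument: take an MMS partition for agent \(i\), observe that its \(n\) bundles are integer points of \(R_i\) summing to \((a,b)\), and conclude that their centroid \(P_\mathcal{I}\) lies in \(IH_i\). Your added remark that the minimum over the finite set \(\Pi\) is attained is a small piece of rigor the paper leaves implicit.
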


\begin{figure}[htbp]
    \centering
    \begin{minipage}{0.48\linewidth}
        \centering
        \resizebox{0.8\linewidth}{!}{
        \begin{tikzpicture}[scale=2.5, >=stealth]
            \draw[->] (0,0) -- (2.2,0) node[right] {$x$};
            \draw[->] (0,0) -- (0,2.2) node[above] {$y$};

            \draw[red, thick] (0,2) -- (1.2,0)
                node[pos=0.1, right, red] {$L_1$};
            \draw[green!70!black, thick] (0,1.5) -- (1.8,0)
                node[pos=0.5, right, green!70!black] {$L_2$};
            \draw[blue, thick] (0,1) -- (2,0)
                node[pos=0.8, above right, blue] {$L_3$};

            \fill[black] (0.8,0.5) circle (0.6pt)
                node[left, black] {\(P_\mathcal{I}\)};
        \end{tikzpicture}}
        \caption{Illustration of Observation~\ref{obs:P} with \(n=3\). Here \(P_\mathcal{I}\in\bigcap_{i=1}^3 IH_i\).}
        \label{fig:obs-P}
    \end{minipage}
    \hfill
    \begin{minipage}{0.48\linewidth}
        \centering
        \resizebox{0.85\linewidth}{!}{
        \begin{tikzpicture}[scale=2.5, >=stealth]
            \draw[->] (0,0) -- (2.2,0) node[right] {$x$};
            \draw[->] (0,0) -- (0,2.2) node[above] {$y$};

            \draw[red, thick] (0,2) -- (1.2,0) node[right, pos=0.4] {\(L_1\)};

            \def\x{0.375}
            \def\y{0.375}
            \foreach \i in {0,...,5} {
                \foreach \j in {0,...,5} {
                    \fill[black] ({\i*\x}, {\j*\y}) circle (0.6pt);
                }
            }

            \begin{scope}[on background layer]
                \fill[red!20] (0,\y*5) -- (\x*2,\y*2)
                    -- (\x*3,0) -- (0,0) -- cycle;
            \end{scope}

            \draw (0,5*\y) -- (2*\x,2*\x) -- (3*\x,0) -- cycle;

            \coordinate (P) at (1.75*\x,2.25*\y);
            \fill[red] (P) circle (0.4pt);
        \end{tikzpicture}}
        \caption{Illustration of Lemma~\ref{lemma:IDP}. \(L_1\) is \(5x+3y=16\). The shaded region is \(IH_1=\operatorname{conv}(R_1\cap\mathbb{Z}^2)\).}
        \label{fig:lemma-IDP}
    \end{minipage}
\end{figure}

See Fig~\ref{fig:obs-P} for an illustration of Observation~\ref{obs:P} when \(n=3.\) Observation~\ref{obs:P} states that \(P_\mathcal{I}\) must lie within \(IH_1,IH_2\) and \(IH_3.\) Hence, geometrically it must lie under each of the lines \(L_1,L_2\) and \(L_3.\) 

\begin{proof}[Proof of Observation~\ref{obs:P}]
Observe that given instance \(\mathcal{I}\), any MMS allocation for agent \(i\in N\) corresponds to \(n\) points \((x_1,y_1),\dots,(x_n,y_n)\in R_i\cap \mathbb{Z}^2\) such that \(\sum_{i=1}^n(x_i,y_i)=(a,b).\) Then, by convexity, their centroid \(P\) lies in \(IH_i.\)
\end{proof}

\paragraph{Unimodular triangle allocations.} Next, we adapt results from~\cite{mccormick_polynomial_1997}, which characterize the feasibility of two-job-type job scheduling allocations based on the feasibility point. We restate the lemma in our setting (for completeness, we also include a proof in Appendix~\ref{sec:appendix-IDP}).
\begin{restatable}{lemma}{lemmaIDP}[Corollary 3.6 of \cite{mccormick_polynomial_1997}] \label{lemma:IDP}
    For any two-chore-type instance \(\mathcal{I}=(a,b,N),\) and for each \(i\in N,\) the following statements are equivalent.
    \begin{enumerate}[label=(\roman*)]
        \item \(P_\mathcal{I}\in IH_i\) 
        \item \(P_\mathcal{I}\) lies in some \textit{unimodular} (area \(1/2\)) triangle with vertices in \(R_i\cap \mathbb{Z}^2\)
        \item There is a triplet of integral points \((x_1,y_1),(x_2,y_2),(x_3,y_3)\in R_i\cap \mathbb{Z}^2\) and integral multiplicities \(n_1,n_2,n_3 \in \mathbb{Z}_{\geq 0}\) such that \(n_1(x_1,y_1)+n_2(x_2,y_2)+n_3(x_3,y_3)=(a,b)\) and \(n_1+n_2+n_3=n.\)
    \end{enumerate}
\end{restatable}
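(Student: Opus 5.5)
We prove the cycle (iii)\(\Rightarrow\)(i)\(\Rightarrow\)(ii)\(\Rightarrow\)(iii).

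\textbf{(iii)\(\Rightarrow\)(i).} Dividing \(n_1(x_1,y_1)+n_2(x_2,y_2)+n_3(x_3,y_3)=(a,b)\) by \(n\) writes \(P_\mathcal{I}\) as a convex combination, with weights \(n_j/n\), of three points of \(R_i\cap\mathbb{Z}^2\). Hence \(P_\mathcal{I}\in IH_i\); this is the same argument as Observation~\ref{obs:P}.

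\textbf{(i)\(\Rightarrow\)(ii).} The polygon \(IH_i\) is a lattice polygon: it is bounded because \(R_i\) is, and its vertices lie in \(\mathbb{Z}^2\). Every lattice polygon admits a triangulation into unimodular triangles using only its lattice points. To build one, first triangulate \(IH_i\) with vertices at all lattice points of \(IH_i\), all of which lie in \(R_i\cap\mathbb{Z}^2\). Keep refining: any triangle containing an extra lattice point in its interior or on an edge is split at that point. This terminates, and it ends with triangles whose only lattice points are their vertices. By Pick's theorem such an empty lattice triangle has area \(0+3/2-1=1/2\), so it is unimodular. Since \(P_\mathcal{I}\) lies in \(IH_i\), it lies in one of these triangles.

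\textbf{(ii)\(\Rightarrow\)(iii).} This is the main step. Let the unimodular triangle have vertices \(v_1,v_2,v_3\in R_i\cap\mathbb{Z}^2\). Write \(P_\mathcal{I}=\lambda_1v_1+\lambda_2v_2+\lambda_3v_3\) with \(\lambda_j\ge0\) and \(\sum_j\lambda_j=1\), and set \(n_j=n\lambda_j\). Then \(\sum_j n_j=n\) and \(\sum_j n_jv_j=(a,b)\), with each \(n_j\ge0\). It remains to show the \(n_j\) are integers. Substitute \(n_3=n-n_1-n_2\) to get
\[
n_1(v_1-v_3)+n_2(v_2-v_3)=(a,b)-nv_3 .
\]
The right-hand side is an integer vector. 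Unimodularity means \(|\det(v_1-v_3,\,v_2-v_3)|=1\), so the matrix with columns \(v_1-v_3,\,v_2-v_3\) has an integral inverse. Therefore \(n_1,n_2\in\mathbb{Z}\), and hence \(n_3=n-n_1-n_2\in\mathbb{Z}\) as well. Combined with nonnegativity this gives (iii).

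The only substantive ingredient is the existence of a unimodular triangulation together with Pick's theorem; the rest is linear algebra.
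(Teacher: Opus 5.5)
Your proof is correct and follows the paper's argument for Lemma~\ref{lemma:IDP-general}: convexity for (iii)\(\Rightarrow\)(i), an empty lattice triangulation plus Pick's theorem for (i)\(\Rightarrow\)(ii), and unimodularity forcing the multiplicities \(n\lambda_j\) to be integral for (ii)\(\Rightarrow\)(iii). The only difference is cosmetic. You eliminate \(n_3\) and invert the \(2\times 2\) matrix with columns \(v_1-v_3,\,v_2-v_3\), while the paper applies Cramer's rule to the \(3\times 3\) matrix \(W\) with a row of ones; subtracting the third column of \(W\) from the first two shows the two determinants coincide.
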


See Fig~\ref{fig:lemma-IDP} for an illustration. Here \(a=7,b=9,n=4,\) so the red point \(P_\mathcal{I}=(7/4,9/4).\) The agent (agent 1) has a disutility of 5 for the Type A chores (corresponding to the \(x\)-axis), and 3 for the Type B chores (corresponding to the \(y\)-axis), and their MMS value is 16, so \(L_1\) is \(5x+3y=16.\) Observe that \((7/4,9/4)=\frac{1}{4}[(0,5)+2\cdot (2,2)+(4,0)],\) so \(P_\mathcal{I}\) is a weighted centroid of the triangle with vertices \((0,5),(2,2),(4,0)\in IH_1\) with multiplicities \(1,2,1.\) This corresponds to an MMS allocation for agent 1.

Observation~\ref{obs:P} along with Lemma~\ref{lemma:IDP}  implies that for any two-chore-type instance \(\mathcal{I}\), there exists an MMS-feasible allocation for each agent \(i\) which uses at most three different types of bundles of integral multiplicities \(n_1+n_2+n_3=n\). These bundles form a unimodular triangle with weighted centroid \(P_\mathcal{I}\). This triangle can be found in polynomial time:

\begin{lemma}[Lemma 3.3 of \cite{mccormick_polynomial_1997}] \label{lemma:alg-time}
Given a two-chore-type instance \(\mathcal{I},\) and an agent \(i,\) a unimodular triangle with vertices in \(R_i\cap \mathbb{Z}^2\) containing \(P_\mathcal{I}\) exists and can be found in \(O(\log^2 t)\) time.
\end{lemma}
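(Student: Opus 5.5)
Existence follows directly from Observation~\ref{obs:P} combined with Lemma~\ref{lemma:IDP}: since \(P_\mathcal{I}\in IH_i\), the equivalence (i)\(\Leftrightarrow\)(ii) gives a unimodular triangle with vertices in \(R_i\cap\mathbb{Z}^2\) containing \(P_\mathcal{I}\). The real content is therefore the algorithm and its running time, and I would follow the scheme of McCormick et al. The parameter \(t\) I read as a bound on the magnitudes of the input numbers (e.g.\ \(\max(a,b,n)\) together with the bit-size of \(\alpha_i,\beta_i,\MMS_i\)).

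The algorithm first computes the integer hull \(IH_i\) of the knapsack polytope \(R_i\), which is a 2-dimensional polygon. Its upper boundary vertices can be listed via the continued-fraction expansion of the slope \(-\alpha_i/\beta_i\), or equivalently by a Euclid-like recursion. A classical fact is that the hull of a 2D knapsack has \(O(\log t)\) vertices, and each vertex is obtained in \(O(\log t)\) arithmetic steps. This gives the \(O(\log^2 t)\) bound.

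Once the hull vertices \(v_0=(0,\lfloor \MMS_i/\beta_i\rfloor),v_1,\dots,v_k=(\lfloor\MMS_i/\alpha_i\rfloor,0)\) and the origin are in hand, I locate \(P_\mathcal{I}\) in the polygon.
\begin{enumerate}[label=(\alph*)]
\item Binary search over the edges finds the fan triangle \(\text{conv}(0,v_j,v_{j+1})\) containing \(P_\mathcal{I}\); this takes \(O(\log\log t)\) comparisons, or just a linear scan.
\item I triangulate that lattice triangle into unimodular triangles. Every lattice polygon admits a unimodular triangulation. Along the boundary edge \(v_jv_{j+1}\) the primitive lattice points are equally spaced, and the lattice points near the edge form a parallelogram structure, which lets me pinpoint the unimodular triangle containing \(P_\mathcal{I}\) arithmetically rather than by enumeration.
\end{enumerate}

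The main obstacle is step (b): the fan triangle can contain very many lattice points, so the unimodular triangle has to be found without enumerating them. My plan is to apply a unimodular change of basis that maps the edge direction of \(v_jv_{j+1}\) to \((1,0)\). This basis can be computed by extended Euclid in \(O(\log t)\) time. In the new coordinates the problem reduces to a lattice triangle with one horizontal edge. I would then descend row by row using floor computations, or recurse Euclid-style on the remaining slope, until the containing unimodular triangle is isolated. At each recursion step the coefficients shrink geometrically, which keeps the total cost within \(O(\log^2 t)\).
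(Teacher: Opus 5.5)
Your existence argument is exactly what the paper itself proves. Observation~\ref{obs:P} gives \(P_\mathcal{I}\in IH_i\), and (i)\(\Rightarrow\)(ii) of Lemma~\ref{lemma:IDP} (proved in the appendix via a triangulation into empty lattice triangles and Pick's theorem) produces the unimodular triangle. The paper does not prove the \(O(\log^2 t)\) bound; it imports it from McCormick et al.\ (their Lemma 3.3, generalised in Theorem 6.1). Your reconstruction of that algorithm has a genuine gap, and it sits exactly at the step you call the main obstacle.

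Step (b) fails as described. After the change of basis that makes the hull edge horizontal, the unimodular triangle containing \(P_\mathcal{I}\) need not have its vertices on the two rows that bracket \(P_\mathcal{I}\). A side of the fan triangle through the origin can cross many rows without meeting lattice points, which leaves slivers not covered by any triangle with vertices on two consecutive rows. Concretely, take \(\alpha_i=3\), \(\beta_i=1\), \(a=2\), \(b=15\), \(n=10\). Then \(\MMS_i=3\), \(IH_i=\operatorname{conv}\{(0,0),(0,3),(1,0)\}\) is itself the only fan triangle, and \(P_\mathcal{I}=(1/5,3/2)\). The map \((x,y)\mapsto(x,3x+y)\) sends the hull edge to the segment from \((0,3)\) to \((1,3)\), the triangle to \(\operatorname{conv}\{(0,0),(0,3),(1,3)\}\), and \(P_\mathcal{I}\) to \((1/5,21/10)\). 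The lattice points on rows \(2\) and \(3\) are \((0,2),(0,3),(1,3)\), and their convex hull misses \((1/5,21/10)\) because \(21/10<2+1/5\). The unique unimodular triangle containing that point is \(\operatorname{conv}\{(0,1),(0,2),(1,3)\}\), which spans three rows. In original coordinates it is \((0,1),(0,2),(1,0)\) with multiplicities \(1,7,2\).

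So no computation local to the bracketing rows isolates the triangle. Descending row by row literally would also cost the lattice height of the origin above the edge, which can be linear rather than logarithmic in the input numbers (for \(R_i=\{(x,y)\geq 0: Kx+y\le K\}\) it is \(K\)). Your fallback, ``recurse Euclid-style on the remaining slope'' with geometrically shrinking coefficients, is precisely the nontrivial content of the cited lemma, and it is left unspecified. To make it a proof you would need to:
\begin{enumerate}[label=(\arabic*)]
\item say what subproblem is passed down the recursion;
\item show its vertices stay in \(R_i\cap\mathbb{Z}^2\) and that it still contains \(P_\mathcal{I}\);
\item show the recursion terminates in a unimodular triangle;
\item bound its depth by \(O(\log t)\).
\end{enumerate}
As written, the running-time claim is asserted rather than proved. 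Either cite McCormick et al.\ for it, as the paper does, or supply that recursion in full.
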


\section{MMS allocation for two types of chores} \label{section:main-result}

In this section, we describe the allocation procedure for chores and prove the following theorem. 

\begin{theorem}\label{thm:main}
    Given a two-chore-type instance \(\mathcal{I}=(a,b,N),\) an MMS allocation can be computed in polynomial time.
\end{theorem}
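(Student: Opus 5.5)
We proceed by induction on \(n\), in the spirit of a ``last diminisher'' argument adapted to the geometric picture. The aim is to find, in polynomial time, some agent \(i\) and some bundle \((x,y)\) with two properties. First, \((x,y)\) is MMS-feasible for \(i\). Second, removing \(i\) together with that bundle leaves an instance \(\mathcal{I}'=(a-x,b-y,N\setminus\{i\})\) in which every remaining agent \(j\) satisfies \(\MMS_j(\mathcal{I}')\le \MMS_j(\mathcal{I})\). Given such a pair, recursion on \(\mathcal{I}'\) yields the allocation.

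By Lemma~\ref{lemma:IDP}, the condition \(\MMS_j(\mathcal{I}')\le\MMS_j(\mathcal{I})\) is equivalent to \(P_{\mathcal{I}'}=\frac{(a,b)-(x,y)}{n-1}\in IH_j\), where \(IH_j\) is still computed with threshold \(\MMS_j(\mathcal{I})\). Each \(\MMS_j\) is computed beforehand via Lemma~\ref{lemma:mms-identical}, applied to the instance with all agents identical to \(j\). Note that \(P_{\mathcal{I}'}\in IH_j\) holds exactly when \((x,y)\in (a,b)-(n-1)IH_j\). So the key object for each agent \(j\) is the set \(S_j=\{(x,y)\in\mathbb{Z}^2_{\ge0}: (x,y)\le(a,b),\ (a,b)-(x,y)\in(n-1)IH_j\}\) of \emph{safe} bundles. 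We need an integral point lying in \(R_i\cap\bigcap_{j\ne i}S_j\) for some \(i\).

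To find it, each agent \(j\) uses Lemma~\ref{lemma:alg-time} to obtain its unimodular triangle, with vertices \(u^j_1,u^j_2,u^j_3\in R_j\cap\mathbb{Z}^2\) and multiplicities summing to \(n\). Every vertex with positive multiplicity lies in \(R_j\cap S_j\), because removing it leaves the other \(n-1\) bundles, which certify \(P_{\mathcal{I}'}\in IH_j\). Among all vertices proposed by all agents, we select one that is ``minimal'' in a suitable order and give it to an agent for whom it lies in \(R_i\). We then argue it is safe for everyone else. The natural order is the one induced by a direction: consider extreme bundles such as the lexicographically smallest, or the one minimizing the slope-weighted cost. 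Intuitively, a bundle that is small for some agent's triangle takes no more than a \(1/n\) share in that agent's geometry. We expect to need a case split on the relative slopes \(\alpha_j/\beta_j\) of the lines \(L_j\). Agents whose lines are steeper than a given agent's line care mostly about \(x\), and shallower ones mostly about \(y\).

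The main obstacle is this last step: showing that some proposed bundle is safe for all other agents simultaneously. Integrality is the difficulty, since \(IH_j\) is not a half-plane and \((n-1)IH_j\) can differ from \((n-1)R_j\) near the boundary. As a first attempt, we would sort the agents by slope and take the agent \(i\) whose triangle vertex \(u\) minimizes \(\alpha_j x+\beta_j y\) relative to \(\MMS_j\) for the extreme-slope agent. We would then show \(u\le\) componentwise-compatible with some vertex of every other agent's triangle. Replacing that vertex in agent \(j\)'s multiset by \(u\) keeps a decomposition of \((a,b)\) into \(n\) bundles, each in \(R_j\), possibly after a local exchange along an edge of the unimodular triangle. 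If this exchange argument fails, the fallback is to use McCormick et al.'s algorithm in the threshold form. Pick candidate bundles \(u\) from the polynomially many vertices of the \(IH_j\) (each integer hull has \(O(\log)\) many vertices). Then test safety for all \(j\) directly via Lemma~\ref{lemma:alg-time}, and prove existence of a safe candidate by a convexity/ham-sandwich-type argument on the regions \((a,b)-(n-1)IH_j\). Each round is polynomial and there are \(n\) rounds, giving polynomial time.
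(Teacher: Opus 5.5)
\textbf{There is a genuine gap: the proposal never proves the key existence step.} Your framework is the paper's: induct on \(n\), precompute each \(\MMS_j\) by Lemma~\ref{lemma:mms-identical}, and maintain \(P_{\mathcal{I}'}\in\bigcap_{j\in N'}IH_j\), which by Lemma~\ref{lemma:IDP} is the same as \(\MMS_j(\mathcal{I}')\le\MMS_j(\mathcal{I})\). But you stop where the actual proof has to start. As you acknowledge, you never show that some bundle is feasible for one agent and safe for all the others.

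Your first attempt also points the wrong way. Feasibility for \(i\) is free, since every vertex of \(\triangle_i(\mathcal{I})\) lies in \(R_i\). The hard part is safety, and safety is monotone \emph{upward}. If \(u\le(a,b)\) and \(u\ge w\) componentwise for some positive-multiplicity vertex \(w\) of agent \(j\)'s triangle, then \((a,b)-u\) is dominated by the sum of \(j\)'s other \(n-1\) bundles. Because \(R_j\cap\mathbb{Z}^2_{\ge 0}\) is downward closed, you can trim those bundles to an exact split. Replacing \(w\) by a \emph{smaller} \(u\) does the opposite: it leaves the surplus \(w-u\) to be absorbed by \(j\)'s other bundles, which can push them above \(\MMS_j\). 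So selecting the ``minimal'' proposed vertex makes safety harder, not easier. The fallback (enumerate hull vertices, then a ``convexity/ham-sandwich-type argument'') only asserts that a safe candidate exists. Your insistence on removing exactly one agent per round is a further restriction you never justify, and the paper's proof does not establish it either.

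\textbf{The missing idea} is how to balance ``small enough for \(i\)'' against ``large enough for everyone else''. The paper takes \(i\) minimal under \(i\preceq j\iff v_i\le v_j\) and \(h_i\le h_j\), where these record where \(L_j\) meets the vertical and horizontal lines through \(P_\mathcal{I}\). It splits the other agents by slope into \(S\) (steeper) and \(G\) (gentler). Minimality yields Observation~\ref{obs:steep-gentle}: \(R_i^{\text{left}}\subseteq R_j^{\text{left}}\) for \(j\in S\), and \(R_i^{\text{below}}\subseteq R_k^{\text{below}}\) for \(k\in G\). The cases then run as follows.
\begin{enumerate}
\item If some vertex of \(\triangle_i(\mathcal{I})\) is top-right of \(P_\mathcal{I}\), it goes to \(i\). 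This is upward monotonicity in fractional form: \(X\ge P_\mathcal{I}\) forces \(0\le P_{\mathcal{I}'}\le P_\mathcal{I}\).
\item Otherwise there is a top-left vertex \(B\), feasible for all of \(S\), and a bottom-right vertex \(C\), feasible for all of \(G\). Depending on how \(n_B\) compares with \(|S|\) and \(n_C\) with \(|G|\), the paper gives copies of \(B\) (and possibly \(A\)) to \emph{every} agent of \(S\), or copies of \(C\) (and possibly \(A\)) to every agent of \(G\), or completes the allocation outright.
\end{enumerate}
Removing a whole group is essential. After all of \(S\) takes \(B\), the feasibility point moves bottom-right of \(P_\mathcal{I}\). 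That is exactly where the observation, together with convexity using \((x(P_\mathcal{I}),0)\) and \((x(C),0)\), keeps \(P_{\mathcal{I}'}\) inside \(IH_k\) for every \(k\in G\), while \(P_{\mathcal{I}'}\) remains a weighted centroid of \(i\)'s triangle. Had only one agent of \(S\) been removed, the observation would give no control over the remaining steep agents to the right of \(x(P_\mathcal{I})\).
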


First, we provide a brief overview of the allocation procedure, and also define some important notions used in the procedure. Given two-chore-type instance \(\mathcal{I},\) we call the triangle computed by Lemma~\ref{lemma:alg-time} as the ``MMS triangle" for agent \(i,\) denoted by \(\triangle_i(\mathcal{I}).\) We denote by \(\mu_i(\mathcal{I})\) the corresponding multiplicities of its vertices.

\paragraph{Overview of allocation procedure.} The procedure works recursively. At each step, we assign bundles to a subset of agents and remove the corresponding chores and agents. The main invariant is that the feasibility point of the remaining instance remains in the intersection of the integer hulls of the remaining agents' feasible regions. This ensures that the MMS values of each remaining agent do not increase (due to Lemma~\ref{lemma:IDP}), and we can recurse on a smaller instance by finding an MMS allocation for the remaining instance.

Let us describe a simple case (corresponding to Case 1 in the proof) as an illustrative example. Suppose we find some integer point (bundle) \(X\) which lies ``top-right" of \(P_\mathcal{I},\) i.e. \(x(X)\geq x(P_\mathcal{I})\) and \(y(X)\geq y(P_\mathcal{I})\) (see Fig~\ref{fig:case-1-illustrative}).

\begin{figure}[htbp]
    \centering
    \resizebox{0.4\linewidth}{!}{
    \begin{tikzpicture}[scale=1, >=stealth]
    
    \draw[->] (0,0) -- (6,0) node[right] {$x$};
    \draw[->] (0,0) -- (0,6) node[above] {$y$};
    
    \coordinate (P) at (3,4);
    \fill (P) circle (1pt) node[above left]{\(P_\mathcal{I}\)};
    
    \draw[dashed] (6,4) -- (0,4) node[left]{\(y(P_\mathcal{I})\)};
    \draw[dashed] (3,6) -- (3,0) node[below]{\(x(P_\mathcal{I})\)};
    
    \coordinate (X) at (5,5);
    \coordinate (P') at (2,3.5);
    \fill (X) circle (1pt) node[above right]{\(X\)};
    
    \draw[dashed] (X) -- (P');
    \fill (P') circle (1pt) node[left]{\(P_{\mathcal{I}'}\)};
    
    \coordinate (O) at (0,0);
    \fill (O) circle (1pt) node[below left]{\(O\)};
    
    \begin{scope}[on background layer]
        \fill[green!15] (O) -- (0,4) -- (P) -- (3,0) -- cycle;
    \end{scope}
    
    \end{tikzpicture}
    }
    \caption{(Case 1). Upon assigning bundle \(X\), the feasibility point moves from \(P_\mathcal{I}\) to \(P_{\mathcal{I}'}.\)} \label{fig:case-1-illustrative}
\end{figure}

Here \(P_\mathcal{I}=(a/n,b/n)\) represents a (possibly fractional) bundle with a proportional number of chores of each type. The disutility for this bundle will be exactly \(d_i(M)/n\) for any agent \(i.\) \(X\) represents a bundle which takes a ``more than proportional" amount of each type of chore (\(\geq a/n\) Type A chores, and \(\geq b/n\) Type B chores), so intuitively it should be an undesirable bundle (so getting rid of that set of chores would be ideal). If we can find some agent \(i\) which still finds bundle \(X\) \(\MMS\)-feasible, we can assign \(X\) to agent \(i.\) Consider the remaining instance \(\mathcal{I}'=(a-x(X),b-y(X),N\setminus\{i\}).\) Geometrically, the new feasibility point \(P_{\mathcal{I}'}\) must lie along the direction of \(\overrightarrow{XP_{\mathcal{I}}},\) and ``bottom-left" of \(P_\mathcal{I}.\) Since \(P_\mathcal{I}\) lay within \(\cap_{i\in N}IH_i,\) by convexity \(P_{\mathcal{I}'}\) must also lie within \(\cap_{i\in N'}IH_i.\) This is now a smaller instance and we can recurse.

This ``easy" case illustrates the general strategy. However, an integer bundle \(X\) ``top-right" of \(P_\mathcal{I}\) may not always exist (this corresponds to Cases 2a-e of the proof) and finding good points we can assign (and which agents to assign them to) is non-trivial, and requires performing case analysis by focusing on some well-chosen agent according to a particular partial order. We describe these cases formally in the proof.

\begin{proof}[Proof of Theorem~\ref{thm:main}]

Throughout the procedure, we say that a point \(X\) is ``top-right" of \(P_\mathcal{I}\) if \(x(X)\geq x(P_\mathcal{I})\) and \(y(X)\geq y(P_\mathcal{I}).\) The terms ``top-left", ``bottom-left", ``bottom-right" are also defined accordingly.

If \(n=1\), then there is a single agent \(i\in N\) with \(P_\mathcal{I}\in IH_i.\) Then we may simply assign the bundle \(P\) to agent \(i\) since it is \(\MMS\)-feasible for them (this is just assigning the entire set of chores to agent \(i\)). Assume the theorem holds for all \(n'<n\), and suppose \(n>1.\) 

First, we define a useful partial order.

\paragraph{Partial ordering of agents.} Let \(v_i\) be the value of \(y\) where \(\alpha_i\cdot x(P_\mathcal{I})+\beta_i\cdot y=\MMS_i,\) and let \(h_i\) be the value of \(x\) where \(\alpha_i\cdot x+\beta_i\cdot y(P_\mathcal{I})=\MMS_i.\) Define a partial order \(\preceq\) on the agents \(N\) so that for \(i,j\in N,\)
\[
i\preceq j \iff v_i\leq v_j \text{ and } h_i\leq h_j.
\]

Fix \(i\in N\) to be a minimal agent under \(\preceq\) (see Fig~\ref{fig:partial-order}).  Intuitively, \(L_i\) will be ``one of the closest" to \(P_\mathcal{I}\) among all the lines \((L_j)_{j\in N}.\) 

\begin{figure}[htbp]
    \centering
    \resizebox{0.4\linewidth}{!}{
    \begin{tikzpicture}[scale=1, >=stealth]
    \draw[->] (0,0) -- (6.5,0) node[right] {$x$};
    \draw[->] (0,0) -- (0,6.5) node[above] {$y$};
    
    \coordinate (P) at (1,1);
    \fill (P) circle (1pt) node[below left]{\(P_\mathcal{I}\)};
    
    \coordinate (I_j) at (0.8*6/2.8, -0.8*6/2.8 + 4);
    
    \draw[dashed] (1,0) node[below]{\(x(P)\)} -- (1,6);
    \draw[dashed] (0,1) node[left]{\(y(P)\)} -- (6,1);
    
    \draw[green!70!black, thick] (0,4) -- (4,0) node[pos=0.1, right, green!70!black] {\(L_i\)}; 
    
    \draw[dashed, green!70!black] (0,3) -- (1,3) node[pos=0, left, green!70!black] {\(v_i\)}; 
    
    \draw[dashed, green!70!black] (3,1) -- (3,0) node[pos=1, below, green!70!black] {\(h_i\)}; 
    
    \draw[red, thick] (0,5.5) -- (4.5,0) node[pos=0.1, right, red] {\(L_j\)}; 
    
    \draw[dashed, red] (0,-5.5/4.5 + 5.5) -- (1,-5.5/4.5 + 5.5) node[pos=0, left, red] {\(v_j\)}; 
    
    \draw[dashed, red] (4.5*4.5/5.5, 1) -- (4.5*4.5/5.5, 0) node[pos=1, below, red] {\(h_j\)}; 
    
    \draw[blue, thick] (0,3.2) -- (6,0) node[pos=0.8, right, blue] {\(L_k\)}; 
    
    \draw[dashed, blue] (0, -3.2/6+3.2) -- (1, -3.2/6+3.2) node[pos=0, left, blue] {\(v_k\)}; 
    
    \draw[dashed, blue] (2.2*6/3.2, 1) -- (2.2*6/3.2, 0) node[pos=1, below, blue] {\(h_k\)}; 
    
    \end{tikzpicture}
    }
    \caption{\(L_i\preceq L_j.\) \((L_i,L_j)\) and \((L_j,L_k)\) are incomparable. If \(x < x(P),\) then \(L_i\) is under \(L_j.\) If \(y<y(P),\) then \(L_i\) is under \(L_k.\)}\label{fig:partial-order}
    \end{figure}

Depending on the geometric configuration of vertices of \(\triangle_i(\mathcal{I})\) (i.e. whether they lie ``top-left"/``bottom-left"/etc. of \(P_\mathcal{I}\)) and their corresponding multiplicities. We assign these bundles to different agents case-by-case.

By Observation~\ref{obs:P}, note that we know the feasibility point \(P_\mathcal{I}\in \cap_{i=1}^nIH_i\) for all \(i\in N.\)

Let \(i\in N\) be a minimal agent under \(\preceq.\) Now let \(A,B,C\) be the vertices of agent \(i\)'S MMS triangle \(\triangle_i(\mathcal{I}),\) and \(n_A,n_B,n_C\in \mu_i(\mathcal{I})\) their corresponding multiplicities. The procedure works recursively by allocating bundles \(A,B\) or \(C\) (possibly more than 1 of each) to satisfy some subset of agents, while ensuring that \(\MMS_i(\mathcal{I}')\leq \MMS_i(\mathcal{I})\) for all \(i\in N'\), where \(\mathcal{I}'\) is the remaining instance and \(N'\) is the remaining set of agents. We can then recurse by finding an MMS allocation for \(\mathcal{I}'.\)

In the allocation step, it suffices to show that after removing the corresponding bundles and agents, the new feasibility point, \(P_{\mathcal{I}'}\) remains in \(\cap_{i\in N'}IH_i.\) This is because Lemma~\ref{lemma:IDP} would imply that \(\MMS_i(\mathcal{I}')\leq \MMS_i(\mathcal{I}).\)

There are 5 cases in total (Case 1, and four subcases for Case 2). We prove the correctness of each case separately. \\

\noindent\textbf{Case 1:} Some \(X\in \{A,B,C\}\) lies ``top-right" of \(P_\mathcal{I}.\)

\emph{Allocation.} Assign \(X\) to agent \(i.\) The remaining instance is \(\mathcal{I}'=(a-x(X),b-y(X),N\setminus \{i\}).\)

\emph{Correctness.} See Fig~\ref{fig:case-1-illustrative} from the overview. Since \(X\in R_i\cap \mathbb{Z}^2,\) it is \(\MMS\)-feasible for agent \(i\) so we may assign them bundle \(X.\) For the remaining instance \(\mathcal{I}',\) the new feasibility point \(P_{\mathcal{I}'}=(\frac{a-x(X)}{n-1},\frac{b-y(X)}{n-1})\) will lie away from \(P_\mathcal{I}\) in the direction of \(\overrightarrow{XP_\mathcal{I}},\) which is ``bottom-left" of \(P_\mathcal{I}.\) Since \(a-x(X)\geq 0\) and \(b-y(X)\geq 0,\) \(P_{\mathcal{I}'}\) lies within the rectangle defined by the points \((0,0),(x(P_\mathcal{I}),0),P,(0,y(P_\mathcal{I}))\in R_j\cap \mathbb{Z}^2\) for all \(j\in N\). Thus by convexity \(P_{\mathcal{I}'}\in \cap_{j\in N\setminus \{i\}} IH_j\) and we can induct. \\

    
    
    
    
    
    
    
    

\noindent\textbf{Case 2:} None of the vertices \(A,B,C\) of \(\triangle_i(\mathcal{I})\) lie ``top-right" of \(P_\mathcal{I}.\) 

Now, observe that at least one of \(A,B,C\) must have \(y\)-coordinate which is greater than or equal to \(y(P_\mathcal{I})\) (since the weighted centroid of \(A,B,C\) is \(P_\mathcal{I}\)). WLOG, let this be \(B.\) If we are not in Case 1,  then we must have \(x(B)<x(P_\mathcal{I}),\) i.e. \(B\) is ``top-left" of \(P_\mathcal{I}.\) Similarly, we may assume that \(C\) lies ``bottom-right" of \(P_\mathcal{I}\) (i.e \(y(C)\leq y(P_\mathcal{I})\) and \(x(C)>x(P_\mathcal{I})\)), and WLOG that \(A\) lies under \(BC\) (so \(A,B,C\) are in clockwise order).

Next, define the set \(S:=\{j\in N\setminus \{i\}:\frac{\beta_j}{\alpha_j}\leq \frac{\beta_i}{\alpha_i}\}\) to be the set of agents \(j\in N\) such that the slope of the line \(L_j\) is steeper than that of \(L_i.\) Analogously, we define \(G:=\{j\in N\setminus \{i\}:\frac{\beta_j}{\alpha_j}> \frac{\beta_i}{\alpha_i}\}\) to be the set of agents \(j\in N\) such that the slope of the line \(L_j\) is gentler than that of \(L_i.\) 

Let \(I_{j}\) for \(j\in N\) denote the intersection point of lines \(L_i\) and \(L_j.\) For \(j\in N\), let \(R_j^\text{left}:=\{(x,y)\in R_j: x\leq x(P_\mathcal{I})\}\) and \(R_j^\text{below}:=\{(x,y)\in R_j: y\leq y(P_\mathcal{I})\}.\) We can make the following observation:

\begin{observation}\label{obs:steep-gentle}
    For all \(j\in S, R_i^\text{left}\subseteq R_j^\text{left},\) and for all \(k\in G, R_i^\text{below}\subseteq R_k^\text{below}\) (see  Fig~\ref{fig:partial-order-subset}).
\end{observation}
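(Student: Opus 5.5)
We prove the first inclusion; the second follows by the symmetric argument with the roles of \(x\) and \(y\) exchanged.

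The plan is to reduce each inclusion to a comparison of two lines on a half-plane. We use two facts about \(L_i\) and \(L_j\): \(i\) is minimal under \(\preceq\), and the slopes are ordered as in the definition of \(S\) or \(G\).

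Fix \(j\in S\). Both \(L_i\) and \(L_j\) are lines with negative slope. Write them as graphs
\[
y=f_i(x):=\frac{\MMS_i-\alpha_i x}{\beta_i},\qquad y=f_j(x):=\frac{\MMS_j-\alpha_j x}{\beta_j}.
\]
Then \(R_i^\text{left}=\{(x,y)\ge 0: x\le x(P_\mathcal{I}),\ y\le f_i(x)\}\), and \(R_j^\text{left}\) is defined analogously. It therefore suffices to show that \(f_i(x)\le f_j(x)\) for all \(0\le x\le x(P_\mathcal{I})\).

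The function \(g(x):=f_j(x)-f_i(x)\) is affine, with slope \(\alpha_i/\beta_i-\alpha_j/\beta_j\). The condition \(\beta_j/\alpha_j\le\beta_i/\alpha_i\) is equivalent to \(\alpha_j/\beta_j\ge\alpha_i/\beta_i\), so the slope of \(g\) is \(\le 0\) and \(g\) is non-increasing. Consequently its minimum on \([0,x(P_\mathcal{I})]\) is attained at \(x=x(P_\mathcal{I})\). There \(g(x(P_\mathcal{I}))=v_j-v_i\) by the definition of \(v\).

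The remaining step is to show \(v_i\le v_j\) from minimality of \(i\). Minimality alone only says that \(j\not\preceq i\) strictly, i.e.\ it is not the case that both \(v_j\le v_i\) and \(h_j\le h_i\) with one inequality strict. It does not immediately give \(v_i\le v_j\), so this is the step I expect to need care.

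Suppose for contradiction that \(v_j<v_i\). Then \(L_j\) lies strictly below \(L_i\) at \(x=x(P_\mathcal{I})\). Because \(L_j\) is at least as steep as \(L_i\), the gap only grows to the right, so \(L_j\) is strictly below \(L_i\) for all \(x\ge x(P_\mathcal{I})\). The point \((h_j,y(P_\mathcal{I}))\) lies on \(L_j\). We first need \(h_j\ge x(P_\mathcal{I})\); this holds because \(P_\mathcal{I}\in IH_j\subseteq R_j\) by Observation~\ref{obs:P}, so \(P_\mathcal{I}\) is weakly below \(L_j\). Hence at height \(y(P_\mathcal{I})\), line \(L_j\) reaches \(x=h_j\ge x(P_\mathcal{I})\), where it lies strictly below \(L_i\), and therefore \(h_j<h_i\). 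Together with \(v_j<v_i\) this gives \(j\preceq i\) with \(j\ne i\) in value, contradicting minimality of \(i\).

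One subtlety remains: ties where \(v_j=v_i\) and \(h_j=h_i\). In that case the conclusion \(v_i\le v_j\) holds trivially, so ties cause no problem.

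Thus \(g\ge 0\) on \([0,x(P_\mathcal{I})]\), and \(R_i^\text{left}\subseteq R_j^\text{left}\).

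For \(k\in G\), we write the lines as functions of \(y\) instead and apply the same argument. The slope comparison \(\beta_k/\alpha_k>\beta_i/\alpha_i\) now makes \(x\)-as-a-function-of-\(y\) for \(L_k\) decrease more slowly in \(y\) than for \(L_i\). The comparison therefore reduces to the endpoint \(y=y(P_\mathcal{I})\), i.e.\ to \(h_i\le h_k\). This inequality follows from minimality of \(i\) by the same contradiction, now using \(v_k\) on the line \(x=x(P_\mathcal{I})\).
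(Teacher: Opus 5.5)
Your proof is correct and follows the paper's argument. The paper shows by minimality that the crossing point \(I_j\) of \(L_i\) and \(L_j\) cannot lie left of \(x(P_\mathcal{I})\), which is equivalent to your endpoint inequality \(v_i\le v_j\); your affine-difference version also covers the parallel-slope case and makes explicit the use of \(P_\mathcal{I}\in R_j\), both of which the paper leaves implicit.

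One wording slip: for \(k\in G\), \(x\) as a function of \(y\) along \(L_k\) decreases \emph{faster} than along \(L_i\), not more slowly, since its slope is \(-\beta_k/\alpha_k<-\beta_i/\alpha_i\). That is exactly why the binding endpoint is \(y=y(P_\mathcal{I})\), which is the conclusion you correctly reach.
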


\begin{proof}
    Let \(j\in S.\) We show \(R_i^\text{left}\subseteq R_j^\text{left}\). Suppose \(x(I_j)<x(P_\mathcal{I}).\) Then, since \(L_j\) is steeper than \(L_i,\) we must have \(v_j<v_i\) and \(h_j<h_i,\) but this contradicts the minimality of \(i\) under \(\preceq.\) Thus \(x(I_j)\geq x(P_\mathcal{I}).\) Then suppose some point \(L\in R_i^{\text{left}}.\) Since \(L\) lies under \(L_i\), \(L_j\) is steeper than \(L_i\), and \(x(L)\leq x(P_\mathcal{I})\leq x(I_j),\) then \(L\in R_j^{\text{left}}.\) The argument for \(R_i^\text{below}\subseteq R_k^\text{below}\) for any \(k\in G\) is symmetric.
\end{proof}

Note that this may not necessarily imply that when \(x\leq x(P), IH_i\subseteq IH_j,\) or when \(y\leq y(P), IH_i\subseteq IH_k\).

\begin{figure}[htbp]
    \centering
    \begin{subfigure}[t]{0.48\textwidth}
        \centering
        \resizebox{0.8\linewidth}{!}{
        \begin{tikzpicture}[scale=1, >=stealth]

        \draw[->] (0,0) -- (6.5,0) node[right] {$x$};
        \draw[->] (0,0) -- (0,6.5) node[above] {$y$};
        
        \coordinate (P) at (1,1);
        \fill (P) circle (1pt) node[below left]{\(P_\mathcal{I}\)};
        
        \coordinate (I_j) at (0.8*6/2.8, -0.8*6/2.8 + 4);
        
        \draw[dashed] (1,0) node[below]{\(x(P)\)} -- (1,6);
        \draw[dashed] (0,1) node[left]{\(y(P)\)} -- (6,1);
        
        \draw[green!70!black, thick] (0,4) -- (4,0) node[pos=0.1, right, green!70!black] {\(L_i\)}; 
        
        \draw[dashed, green!70!black] (0,3) -- (1,3) node[pos=0, left, green!70!black] {\(v_i\)}; 
        
        \draw[dashed, green!70!black] (3,1) -- (3,0) node[pos=1, below, green!70!black] {\(h_i\)}; 
        
        \draw[red, thick] (0,5.5) -- (4.5,0) node[pos=0.1, right, red] {\(L_j\)}; 
        
        \draw[dashed, red] (0,-5.5/4.5 + 5.5) -- (1,-5.5/4.5 + 5.5) node[pos=0, left, red] {\(v_j\)}; 
        
        \draw[dashed, red] (4.5*4.5/5.5, 1) -- (4.5*4.5/5.5, 0) node[pos=1, below, red] {\(h_j\)}; 
        
        \begin{scope}[on background layer]
            \fill[green!15] (O) -- (1,0) -- (1,3) -- (0,4) -- cycle;
        \end{scope}
        
        \begin{scope}[on background layer]
            \fill[red!15] (1,3) -- (0,4) -- (0,5.5) -- (1, -5.5/4.5 + 5.5) -- cycle;
        \end{scope}
        
        \end{tikzpicture}
}
    \end{subfigure}
    \hfill
    \begin{subfigure}[t]{0.48\textwidth}
        \centering
        \resizebox{0.8\linewidth}{!}{
        \begin{tikzpicture}[scale=1, >=stealth]
        
        \draw[->] (0,0) -- (6.5,0) node[right] {$x$};
        \draw[->] (0,0) -- (0,6.5) node[above] {$y$};
        
        \coordinate (P) at (1,1);
        \fill (P) circle (1pt) node[below left]{\(P_\mathcal{I}\)};
        
        \coordinate (I_j) at (0.8*6/2.8, -0.8*6/2.8 + 4);
        \fill (I_j) circle (1pt) node[above]{\(I_j\)};
        
        \draw[dashed] (1,0) node[below]{\(x(P)\)} -- (1,6);
        \draw[dashed] (0,1) node[left]{\(y(P)\)} -- (6,1);
        
        \draw[green!70!black, thick] (0,4) -- (4,0) node[pos=0.1, right, green!70!black] {\(L_i\)}; 
        
        \draw[dashed, green!70!black] (0,3) -- (1,3) node[pos=0, left, green!70!black] {\(v_i\)}; 
        
        \draw[dashed, green!70!black] (3,1) -- (3,0) node[pos=1, below, green!70!black] {\(h_i\)}; 
        
        \draw[blue, thick] (0,3.2) -- (6,0) node[pos=0.8, right, blue] {\(L_k\)}; 
        
        \draw[dashed, blue] (0, -3.2/6+3.2) -- (1, -3.2/6+3.2) node[pos=0, left, blue] {\(v_k\)}; 
        
        \draw[dashed, blue] (2.2*6/3.2, 1) -- (2.2*6/3.2, 0) node[pos=1, below, blue] {\(h_k\)}; 
        
        \begin{scope}[on background layer]
            \fill[green!15] (O) -- (0,1) -- (3,1) -- (4,0) -- cycle;
        \end{scope}
        
        \begin{scope}[on background layer]
            \fill[blue!15] (3,1) -- (4,0) -- (6,0) -- (2.2*6/3.2, 1) -- cycle;
        \end{scope}
        
        \end{tikzpicture}
        }
    \end{subfigure}
    \caption{Illustration of Obs~\ref{obs:steep-gentle}. \(R_i^\text{left}\subseteq R_j^\text{left}\) for \(j\in S\) (left) and \(R_i^\text{below}\subseteq R_k^\text{below}\) for \(k\in G\).}
    \label{fig:partial-order-subset}
\end{figure}

Now, we use the above fact to handle the remaining cases. \\

\noindent\textbf{Case 2a:} \(n_B\geq |S|\) \label{case:2a}

\begin{figure}[htbp]
    \centering
    \begin{subfigure}[t]{0.48\textwidth}
        \centering
        \resizebox{0.8\linewidth}{!}{
        \begin{tikzpicture}[scale=1, >=stealth]

        \draw[->] (0,0) -- (6,0) node[right] {$x$};
        \draw[->] (0,0) -- (0,6) node[above] {$y$};

        \coordinate (P) at (2.5,1.7);
        \fill[red] (P) circle (1pt)
            node[left, xshift=-3pt] {\tiny\(P_{\mathcal{I}}\)};

        \draw[green!70!black, thick] (0,4.5) -- (4.5,0)
            node[pos=0.1, right, green!70!black] {\(L_i\)};

        \draw[blue, thick] (0,3.2) -- (6,0)
            node[pos=0.8, right, blue] {\(L_k\)};

        \coordinate (A) at (3.6,0.3);
        \coordinate (B) at (0.9,3.2);
        \coordinate (C) at (3,1.5);

        \draw[line width=0.6pt, rounded corners=1pt]
            (A) -- (B) -- (C) -- cycle;
        \fill (A) circle (1pt) node[right] {\(A\)};
        \fill (B) circle (1pt) node[left] {\(B\)};
        \fill (C) circle (1pt) node[above right] {\(C\)};

        \coordinate (P') at (553/170,169/170);
        \fill (P') circle (1pt) node[right] {\tiny \(Z\)};
        \draw[dashed, line width=0.6pt] (B) -- (P');

        \end{tikzpicture}
        }
        \caption{\(y(A)\leq y(P_\mathcal{I})\). Red point is
        \(P_\mathcal{I}\). \(Z\) lies on \(\overline{AC}\) and
        \(P_{\mathcal{I}'}\) lies on \(\overline{P_\mathcal{I}Z}\).}
        \label{fig:case-2a-a}
    \end{subfigure}
    \hfill
    \begin{subfigure}[t]{0.48\textwidth}
        \centering
        \resizebox{0.8\linewidth}{!}{
        \begin{tikzpicture}[scale=1, >=stealth]

        \draw[->] (0,0) -- (6,0) node[right] {$x$};
        \draw[->] (0,0) -- (0,6) node[above] {$y$};

        \coordinate (P) at (2.5,1.7);
        \fill[red] (P) circle (1pt)
            node[left, xshift=-6pt] {\tiny\(P_{\mathcal{I}}\)};

        \draw[green!70!black, thick] (0,4.5) -- (4.5,0)
            node[pos=0.1, right, green!70!black] {\(L_i\)};

        \draw[blue, thick] (0,3.2) -- (6,0)
            node[pos=0.8, right, blue] {\(L_k\)};

        \coordinate (A) at (0,3.5);
        \coordinate (B) at (2,2.5);
        \coordinate (C) at (3.5,0.7);

        \draw[line width=0.6pt, rounded corners=1pt]
            (A) -- (B) -- (C) -- cycle;
        \fill (A) circle (1pt) node[left] {\(A\)};
        \fill (B) circle (1pt) node[right] {\(B\)};
        \fill (C) circle (1pt) node[right] {\(C\)};

        \coordinate (P') at (2.75,1.3);
        \fill (P') circle (1pt) node[below] {\tiny \(Z\)};
        \draw[dashed, line width=0.6pt] (B) -- (P');

        \draw[dashed, line width=0.6pt] (P) -- (2.5,0)
            node[below] {\tiny\(x(P_\mathcal{I})\)};
        \draw[dashed, line width=0.6pt] (P) -- (C);
        \draw[dashed, line width=0.6pt] (C) -- (3.5,0)
            node[below] {\tiny\(x(C)\)};

        \begin{scope}[on background layer]
            \fill[green!15] (P) -- (2.5,0) -- (3.5,0) -- (C) -- cycle;
        \end{scope}

        \end{tikzpicture}
        }
        \caption{\(y(A)>y(P_\mathcal{I})\). Red point is
        \(P_\mathcal{I}\). \(P_{\mathcal{I}'}\) lies within the
        region bounded by \((x(P_\mathcal{I}),0)\), \(P_\mathcal{I}\),
        \(C\), and \((x(C),0)\).}
        \label{fig:case-2a-b}
    \end{subfigure}

    \caption{Illustration of Case 2a.}
    \label{fig:case-2a}
\end{figure}

\emph{Allocation.} Assign each agent in \(S\) the bundle \(B.\) The remaining instance is \(\mathcal{I}'=(a-|S|\cdot x(B),b-|S|\cdot y(B),N\setminus S).\) 

\emph{Correctness.} See Fig~\ref{fig:case-2a}. By Observation~\ref{obs:steep-gentle}, the bundles are \(\MMS\)-feasible for all \(k\in S.\) Let the remaining instance be \(\mathcal{I}'=(a-|S|\cdot x(B),b-|S|\cdot y(B),N\setminus S)\) and note that \(N\setminus S = G\cup \{i\}.\) The new feasibility point \(P_{\mathcal{I}'}\) must lie along the segment \(\overline{P_\mathcal{I} Z}\) where \(Z = BP_\mathcal{I} \cap AC\) and ``bottom-right" of \(P_\mathcal{I}\). Now since \(P_{\mathcal{I}'}\) is still a weighted centroid of \(A,B,C\in IH_i,\) then \(P_{\mathcal{I}'}\in IH_i.\) 

\noindent Fix \(k\in G.\) By assumption, since \(y(C)\leq y(P_\mathcal{I}),\) by Observation~\ref{obs:steep-gentle}, we have \(C\in R_i^\text{below}\subseteq R_k^\text{below}\subseteq R_k.\) Since \(C\in\mathbb{Z}^2,\) then \(C\in IH_k.\) 

\noindent Suppose \(y(A) \leq y(P_\mathcal{I})\). Then by the same argument \(A\in IH_k.\) Hence since \(Z\) lies along \(\overline{AC}\), by convexity, \(Z\in IH_k\). 

\noindent Otherwise, suppose \(y(A) > y(P_\mathcal{I})\). Then observe that \(Z\) lies within the polygon defined by the points \((x(P_\mathcal{I}),0),P_\mathcal{I},C,(x(C),0).\) Since \(C\in IH_k,\) then \((x(C),0)\in IH_k.\) Also, by assumption, \(P_\mathcal{I}\in IH_k\) and hence \((x(P_\mathcal{I}),0)\in IH_k.\) Thus by convexity \(Z\in IH_k.\)

\noindent In either case, we have \(Z\in \cap_{k\in N\setminus S}IH_k\), hence \(P_{\mathcal{I}'}\in \cap_{k\in N\setminus S}IH_k\) by the fact that \(Z, P_{\mathcal{I}}\in \cap_{k\in N\setminus S}IH_k\) and the convexity of \(\cap_{k\in N\setminus S}IH_k\), and we may induct. \\

\noindent\textbf{Case 2b:} \(n_C\geq |G|.\)

\emph{Allocation.} Assign each agent in \(G\) the bundle \(C.\) The remaining instance is \(\mathcal{I}'=(a-|G|\cdot x(C),b-|G|\cdot y(C),N\setminus G).\) 

\emph{Correctness.} We handles this symmetrically to Case 2a by assigning each agent in \(G\) the bundle \(C\); the proof follows by exchanging the axes. \\

\noindent\textbf{Case 2c:} \(n_B<|S|, n_C<|G|\) and \(A\) lies ``bottom-left" of \(P_\mathcal{I}\) 

\begin{figure}[htbp]
    \centering
    \resizebox{0.4\linewidth}{!}{
    \begin{tikzpicture}[scale=1, >=stealth]
    
    \draw[->] (0,0) -- (6,0) node[right] {$x$};
    \draw[->] (0,0) -- (0,6) node[above] {$y$};
    
    \coordinate (P) at (3,4);
    \fill (P) circle (1pt) node[below left]{\(P_\mathcal{I}\)};
    
    \draw[dashed] (6,4) -- (0,4) node[left]{\(y(P_\mathcal{I})\)};
    \draw[dashed] (3,6) -- (3,0) node[below]{\(x(P_\mathcal{I})\)};
    
    \coordinate (O) at (0,0);
    \fill (O) circle (1pt) node[below left]{\(O\)};
    
    \begin{scope}[on background layer]
        \fill[green!15] (O) -- (0,4) -- (P) -- (3,0) -- cycle;
    \end{scope}
    
    \coordinate (B) at (2.5,5);
    \fill (B) circle (1pt) node[left] {\(B\)};
    
    \coordinate (C) at (4,3);
    \fill (C) circle (1pt) node[right] {\(C\)};
    
    \coordinate (A) at (2,2);
    \fill (A) circle (1pt) node[below] {\(A\)};
    
    \draw (A) -- (B) -- (C) -- cycle;
    
    \end{tikzpicture}
    }
    \caption{(Case 2c) \(A\) lies ``bottom-left" of \(P_\mathcal{I}\) hence is \(D_i\)-feasible for all \(i\in N.\)} \label{fig:case-2c}
\end{figure}

\emph{Allocation.} Assign all \(n_B\) bundles \(B\) to agents in \(S,\) and all \(n_C\) bundles \(C\) to agents in \(G.\)

\emph{Correctness.} See Fig~\ref{fig:case-2c}. By Observation~\ref{obs:steep-gentle}, these are MMS-feasible assignments. 

Now since \(A\) is ``bottom-left" of \(P_\mathcal{I},\) and \(P_\mathcal{I}\) lies under \(L_j\) for all \(j\in N,\) i.e. \(A\) is \(\MMS\)-feasible for all agents \(j\in N\). Hence we may simply assign \(A\) to any remaining agent, and since \(n_A\cdot A + n_B\cdot B + n_C\cdot C=(a,b)\) (by definition), this is a complete allocation and an \(\MMS\)-feasible one. \\

\noindent\textbf{Case 2d:} \(n_B<|S|, n_C<|G|\) and \(A\) lies ``top-left" of \(P_\mathcal{I}.\)

\begin{figure}[htbp]
    \centering
    \resizebox{0.4\linewidth}{!}{
    \begin{tikzpicture}[scale=1, >=stealth]

    \draw[->] (0,0) -- (6,0) node[right] {$x$};
    \draw[->] (0,0) -- (0,6) node[above] {$y$};
    
    \coordinate (P) at (2.5,1.7);
    \fill[red] (P) circle (1pt);
    
    \draw[green!70!black, thick] (0,4.5) -- (4.5,0) node[pos=0.1, right, green!70!black] {\(L_i\)}; 
    
    \draw[blue, thick] (0,3.2) -- (6,0) node[pos=0.8, right, blue] {\(L_k\)}; 
    
    \coordinate (A) at (0,3.5);
    \coordinate (B) at (2,2.5);
    \coordinate (C) at (3.8,0.6);
    
    \draw[line width=0.6pt, rounded corners=1pt] (A) -- (B) -- (C) -- cycle;
    \fill (A) circle (1pt) node[left] {\(A\)};
    \fill (B) circle (1pt) node[right] {\(B\)};
    \fill (C) circle (1pt) node[right] {\(C\)};
    
    \coordinate (Q) at (418/159,475/318);
    \fill (Q) circle (1pt) node[below] {\tiny\(Q\)};
    \draw[dashed, line width=0.6pt] (B) -- (Q);
    
    \coordinate (P') at (3.2,1.05);
    \fill (P') circle (1pt) node[below] {\tiny \(P_{\mathcal{I}'}\)};
    
    \draw[dashed, line width=0.6pt] (P) -- (2.5,0) node[below] {\tiny\(x(P_\mathcal{I})\)};
    \draw[dashed, line width=0.6pt] (P) -- (C);
    \draw[dashed, line width=0.6pt] (C) -- (3.8,0) node[below] {\tiny\(x(C)\)};
    
    \begin{scope}[on background layer]
        \fill[green!15] (P) -- (C) -- (3.8,0) -- (2.5,0) -- cycle;
    \end{scope}
    
    \end{tikzpicture}
    }
    \caption{(Case 2d)} \label{fig:case-2d}
\end{figure}

\emph{Allocation.} Assign \(n_B\) agents in \(S\) the bundle \(B.\) Assign \(|S|-n_B\) remaining agents in \(S\) the bundle \(A.\) The remaining instance is \(\mathcal{I}'=(a-n_B\cdot x(B)-(|S|-n_B)\cdot x(A),b-n_B\cdot y(B)-(|S|-n_B)\cdot y(A),N\setminus S)\)

\emph{Correctness.} See Fig~\ref{fig:case-2d}. Note that there are enough bundles since \(n_A+n_B+n_C=|S|+|G|+1\) and \(n_C<|G|\) implies \(n_A+n_B>|S|.\) By Observation~\ref{obs:steep-gentle}, these are \(D_j\)-feasible for all \(j\in S.\) Let the remaining instance be \(\mathcal{I}'=(a-n_B\cdot x(B)-(|S|-n_B)\cdot x(A),b-n_B\cdot y(B)-(|S|-n_B)\cdot y(A),N\setminus S)\) and note that \(N\setminus S=G\cup \{i\}.\) Let \(Q:=\frac{n_A\cdot A+n_C\cdot C}{n_A+n_C}\) be the weighted centroid of the remaining points after removing bundles \(B.\) This must lie along the segment \(\overline{AC},\) and is ``bottom-right" of \(P_\mathcal{I}\). Now upon removing \(|S|-n_B\) copies of \(A,\) the new feasibility point \(P_{\mathcal{I}'}\) must lie along the segment \(\overline{QC},\) and is still ``bottom-right" of \(P_\mathcal{I}.\) Thus, like in Case 2a, this is within the polygon defined by the points \((x(P_\mathcal{I}),0),P_\mathcal{I},C,(x(C),0).\) Then by the same argument \(P_{\mathcal{I}'}\in \cap_{j\in N\setminus S} IH_j(D_j)\) and we may induct. \\

\noindent\textbf{Case 2e:} \(n_B<|S|, n_C<|G|\) and \(A\) lies ``bottom-right" of \(P_\mathcal{I}.\)

\emph{Allocation.} Assign \(n_C\) agents in \(G\) the bundle \(C.\) Assign \(|G|-n_C\) remaining agents in \(G\) the bundle \(A.\) The remaining instance is \(\mathcal{I}'=(a-n_C\cdot x(C)-(|G|-n_C)\cdot x(A),b-n_C\cdot y(C)-(|G|-n_C)\cdot y(A),N\setminus G)\)

\emph{Correctness.} We handle this symmetrically to Case 2d by assigning all \(n_C\) bundles \(C\) to agents in \(G,\) and \(|G|-n_C\) copies of \(A\) to the remaining agents in \(G.\) The proof follows by exchanging the axes. 

\end{proof}

\textbf{Time complexity.} By Lemma~\ref{lemma:mms-identical}, the values \((\MMS_j)_{j\in N}\) may be computed in polynomial time. In each iteration, we may find the minimal agent \(i\in N\) under \(\preceq\) in \(O(n)\) time. Computing \(\triangle_i(\mathcal{I})\) also takes \(O(\log^2 \MMS_i)\) time by Lemma~\ref{lemma:alg-time}, and the number of agents to consider after each iteration decreases by at least one. Thus the allocation procedure can be done in polynomial time.


\section{Extension to goods and mixed items}\label{section:Extension}

In this section, we briefly note that the procedure in Section~\ref{section:main-result} can also be modified to show the existence of MMS allocations when the two types of items are both goods, or when there is one type of item which is an \textit{objective good} (positive utility to all agents), and another type of item which is an \textit{objective chore} (negative utility to all agents). Moreover, this can also be done in polynomial time. The proofs make use of essentially the same techniques. 

Instead of working with disutilities, we use regular agent valuation functions \(v_i:M\to \mathbb{R}.\) An item \(o\) is a good (resp. chore) for agent \(i\) if \(v_i(o)>0\) (\(v_i(o)<0\)). Then recall in this setting the maximin share of an agent is defined by
\[
\MMS_i(M) := \max_{(X_i)_{i\in N}\in\Pi}\min_{j\in N}v_i(X_j),
\]
so an MMS allocation is one where each agent receives \textit{at least} their MMS value.
We re-define MMS-feasibility to represent if the value of a bundle \(B\) for some agent \textit{exceeds} their MMS value.

Observe that, in an fair-division instance with goods, the set of \(\MMS_i\)-feasible bundles for agent \(i\), \(R_i\) now refers to the nonnegative region lying \textit{above} the line \(\alpha_ix+\beta_iy=\MMS_i\) (where \(\alpha_i,\beta_i>0\)). Similarly, when there is one type of \textit{objective good} and one type of \textit{objective chore}, WLOG let Type A refer to the goods and Type B refer to the chores. Then \(R_i\) refers to the nonnegative region lying \textit{below} the line \(\alpha_ix-\beta_iy=\MMS_i\) where \(\alpha_i,\beta_i>0.\) One can show, using essentially the same proof, that in these settings Lemma~\ref{lemma:IDP} still holds.

\begin{figure}[htbp]
    \centering
    \begin{subfigure}[t]{0.48\textwidth}
        \centering
        \resizebox{0.8\linewidth}{!}{
        \begin{tikzpicture}[scale=2.5, >=stealth]

        \draw[->] (0,0) -- (2.2,0) node[right] {$x$};
        \draw[->] (0,0) -- (0,2.2) node[above] {$y$};
        
        \draw[red, thick] (0,2) -- (1.2,0) node[pos=0.91, left, red] {\small $L_i$};
        
        \def\x{0.375}   
        \def\y{0.375}     
        
        \foreach \i in {0,...,5} {
            \foreach \j in {0,...,5} {
            \fill[black] ({\i*\x}, {\j*\y}) circle (0.6pt);
            }
        }
        
        \begin{scope}[on background layer]
            \fill[red!20] (\x*4, 0) -- (\x*2, \y*2) -- (\x*1, \y*4) -- (\x*0.3, \y*5.8) -- (\x*5.5, \y*5.8) -- (\x*5.5, 0) -- cycle;
        \end{scope}
        
        
        \coordinate (P) at (2*\x,2.25*\y);
        \fill[red] (P) circle (0.4pt);
        
        \draw (2*\x,2*\y) -- (1*\x,4*\x) -- (3*\x,1*\y) -- cycle;

        
        \end{tikzpicture}   
        }
        \caption{Goods instance. \(L_i\) is \(5x+3y=16.\) \( P_\mathcal{I}=(2,\frac{9}{4})=\frac{1}{4}((2\cdot (2,2)+(1,4)+(3,1)).\) Shaded region is \(IH_i.\)}
    \end{subfigure}
    \hfill
    \begin{subfigure}[t]{0.48\textwidth}
        \centering
        \resizebox{0.8\linewidth}{!}{
        \begin{tikzpicture}[scale=2.5, >=stealth]

        \draw[->] (0,0) -- (2.2,0) node[right] {$x$};
        \draw[->] (0,0) -- (0,2.2) node[above] {$y$};
        
        \def\x{0.375}   
        \def\y{0.375}     
        
        \draw[blue, thick] (0.2*\x,0) -- (3.68*\x,5.8*\y) node[pos=0.77, left, blue] {\small $L_i$};
        
        \foreach \i in {0,...,5} {
            \foreach \j in {0,...,5} {
            \fill[black] ({\i*\x}, {\j*\y}) circle (0.6pt);
            }
        }
        
        \begin{scope}[on background layer]
            \fill[blue!20] (2*\x,3*\y) -- (3.68*\x, 5.8*\y) -- (5.5*\x, 5.8*\y) -- (5.5*\x, 0) -- (1*\x, 0) -- (1*\x, 1*\y) -- cycle;
        \end{scope}
        
        
        \coordinate (P) at (2.25*\x,3*\y);
        \fill[red] (P) circle (0.4pt);
        
        \draw (2*\x,3*\y) -- (3*\x,4*\y) -- (2*\x,2*\y) -- cycle;

        
        \end{tikzpicture}   
        }
        \caption{Mixed instance. \(L_i\) is \(5x-3y=1.\) \(P_\mathcal{I}=(\frac{9}{4},3)=\frac{1}{4}(2\cdot (2,3)+(3,4)+(2,2)).\) Shaded region is \(IH_i.\)}
    \end{subfigure}
    \label{fig:extensions}
\end{figure}

Although the regions \(R_i\) may now be unbounded, it is still possible to find a unimodular triangle with vertices in \(R_i\cap \mathbb{Z}^2\) containing \(P_\mathcal{I}\) in polynomial time, i.e. Lemma~\ref{lemma:alg-time} can be extended (Theorem 6.1 in \cite{mccormick_polynomial_1997}).

Finally, we may also define \(\preceq\) analogously in each case. Recall that \(v_i\) is the value of \(y\) where \(\alpha_i\cdot x(P_\mathcal{I})+\beta_i\cdot y=\MMS_i,\) and \(h_i\) is the value of \(x\) where \(\alpha_ix+\beta_i\cdot y(P_\mathcal{I})=\MMS_i.\) For an instance of two types of goods, we say that
\[
i \preceq j \iff v_i\geq v_j \text{ and } h_i \geq h_j.
\]
For a mixed instance, we instead say that
\[
i \preceq j \iff v_i\leq v_j \text{ and } h_i \geq h_j.
\]

Following the above definitions, extending the allocation procedure to these cases essentially follow the same case analyses. Hence we have the following results.

\begin{restatable}{theorem}{goods}\label{thm:goods}
    Given an instance of fair allocation with only two types of goods, there exists an \(\MMS\) allocation which can be computed in polynomial time.
\end{restatable}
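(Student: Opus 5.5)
I plan to mirror the proof of Theorem~\ref{thm:main}, running the same induction on \(n\) with the inequalities reversed. For goods, \(R_i=\{(x,y)\in\mathbb{R}^2_{\geq 0}:\alpha_ix+\beta_iy\geq \MMS_i\}\) is an unbounded upward-closed region, and \(IH_i\) is its integer hull. Observation~\ref{obs:P} carries over verbatim: an MMS partition for agent \(i\) gives \(n\) integer points of \(R_i\) summing to \((a,b)\), so their centroid \(P_\mathcal{I}\) lies in \(IH_i\). I take as given the goods version of Lemma~\ref{lemma:IDP} and of Lemma~\ref{lemma:alg-time} (Theorem 6.1 of \cite{mccormick_polynomial_1997}). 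Together they give each agent an MMS triangle with vertices in \(R_i\cap\mathbb{Z}^2\), integral multiplicities summing to \(n\), and weighted centroid \(P_\mathcal{I}\). The invariant is the same as before: \(P_{\mathcal{I}'}\in\bigcap_{j\in N'}IH_j\), which by the lemma means \(\MMS_j(\mathcal{I}')\geq \MMS_j(\mathcal{I})\).

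The geometry is the chore picture reflected through \(P_\mathcal{I}\), so ``top-right'' and ``bottom-left'' swap roles. In \textbf{Case 1}, some vertex \(X\) of the minimal agent \(i\)'s triangle lies bottom-left of \(P_\mathcal{I}\). This vertex is a small bundle, so I give it to \(i\). The new point \(P_{\mathcal{I}'}\) then moves top-right of \(P_\mathcal{I}\). Every point top-right of \(P_\mathcal{I}\) lies in each \(IH_j\), because \(IH_j\) is upward closed: if \(Q\in IH_j\), then \(Q+\mathbb{R}^2_{\geq0}\subseteq IH_j\), since adding integer points stays in \(R_j\) and convex hulls preserve this. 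So the invariant holds.

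Otherwise, one vertex \(B\) is top-left of \(P_\mathcal{I}\) and another, \(C\), is bottom-right. I take \(i\) minimal under the goods order \(i\preceq j\iff v_i\geq v_j,\ h_i\geq h_j\), so \(L_i\) is one of the lines closest to \(P_\mathcal{I}\) from below. I define \(S\) and \(G\) by slope exactly as before. The analogue of Observation~\ref{obs:steep-gentle} needs a reversed region: I set \(R_j^{\text{left}}=\{(x,y)\in R_j: x\leq x(P_\mathcal{I})\}\). The claim is that, for the appropriate slope class, \(L_j\) lies below \(L_i\) on that half-plane, so \(R_i^{\text{left}}\subseteq R_j^{\text{left}}\), and symmetrically for \(R^{\text{below}}\) and the other class. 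As in the chore case, this follows from minimality: the intersection \(I_j\) of \(L_i\) and \(L_j\) cannot lie strictly on the wrong side of \(x(P_\mathcal{I})\), or \(j\) would be strictly smaller than \(i\).

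Here the chore proof's labels for which class gets which half-plane may flip, and I will check this carefully. With that settled, Cases 2a–2e go through with the same allocations. The class whose region contains \(B\) receives copies of \(B\), or tops up with copies of \(A\) when \(A\) is on the same side; symmetrically for \(C\).

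In Case 2c, \(A\) lies top-right of \(P_\mathcal{I}\) and is therefore feasible for everyone. In the remaining cases I must show \(P_{\mathcal{I}'}\in IH_k\) for the remaining agents. The chore proof used the quadrilateral with vertices \((x(P_\mathcal{I}),0),P_\mathcal{I},C,(x(C),0)\), obtained by dropping \(P_\mathcal{I}\) and \(C\) to the axis. I will replace it with the region obtained by pushing \(P_\mathcal{I}\) and \(C\) upward, or to the right, to infinity. The new point \(P_{\mathcal{I}'}\) lies on a segment from \(P_\mathcal{I}\) toward the side of \(AC\) beyond \(P_\mathcal{I}\), so it lies in the convex hull of \(P_\mathcal{I}\), \(C\), and their upward translates, all of which are in \(IH_k\) by upward closure.

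The main obstacle is this step, together with the orientation of Observation~\ref{obs:steep-gentle} in the goods setting. I will recheck with a picture that the remaining centroid \(Q\) lands in a region reachable by convexity and upward closure. For the running time, the \(\MMS_j\) values for identical agents come from a goods analogue of Lemma~\ref{lemma:mms-identical}, via binary search using the goods feasibility test of \cite{mccormick_polynomial_1997}. Each round removes at least one agent and costs polynomial time, so the whole procedure is polynomial.
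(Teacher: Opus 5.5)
Your plan is the one the paper has in mind: it omits the goods proof, saying only that the chore case analysis carries over with the order \(v_i\ge v_j,\ h_i\ge h_j\). Case 1, the upward closure of \(IH_j\), and Case 2c are fine. There is a genuine gap, though: the two points you defer are exactly where the chore argument does not transfer literally, and the version of the first one that you state is false.

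\textbf{The goods analogue of Observation~\ref{obs:steep-gentle}.} In the goods setting a steeper \(L_j\) lies below \(L_i\) to the \emph{right} of \(I_j\). A gentler \(L_k\) lies below \(L_i\) to the left of \(I_k\), i.e.\ on the part of \(L_i\) \emph{above} \(I_k\). Minimality gives \(x(I_j)\le x(P_\mathcal{I})\) for \(j\in S\) and \(y(I_k)\le y(P_\mathcal{I})\) for \(k\in G\). For instance, if \(x(I_j)>x(P_\mathcal{I})\), then \(v_j>v_i\). Since \(I_j\in L_i\) cannot be top-right of \(P_\mathcal{I}\in R_i\), we get \(y(I_j)<y(P_\mathcal{I})\), hence \(h_j\ge h_i\), contradicting minimality. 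So the correct statements are \(\{Q\in R_i: x(Q)\ge x(P_\mathcal{I})\}\subseteq R_j\) for \(j\in S\) and \(\{Q\in R_i: y(Q)\ge y(P_\mathcal{I})\}\subseteq R_k\) for \(k\in G\). Your \(R_i^{\text{left}}\subseteq R_j^{\text{left}}\) holds for neither class. Left of \(I_j\), a steeper line lies above \(L_i\). A gentler \(k\) can have \(I_k\) bottom-left of \(P_\mathcal{I}\) while being incomparable with \(i\), and then \(L_k\) lies above \(L_i\) for \(x(I_k)<x\le x(P_\mathcal{I})\). Consequently the top-left vertex \(B\) is feasible for \(G\) and the bottom-right vertex \(C\) for \(S\). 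The recipients swap relative to the chore proof: Case 2a becomes \(n_B\ge |G|\), Case 2d tops up \(G\) with copies of a top-left \(A\), and so on.

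\textbf{The convexity step.} In the new Case 2a the remaining agents are \(S\cup\{i\}\). When \(A\) is top-left, you need \(Z=BP_\mathcal{I}\cap AC\) to lie in \(\operatorname{conv}(P_\mathcal{I},C)+\mathbb{R}^2_{\ge 0}\); in Case 2d you need the same for the segment from \(Q\) to \(C\). This requires \(A\) to lie on the upper-right side of the line through \(P_\mathcal{I}\) and \(C\), i.e.\ \((P_\mathcal{I},C,A)\), equivalently \((A,B,C)\), must be counterclockwise. The chore proof's WLOG ``\(A\) lies under \(BC\)'' is the clockwise choice. With it, \(Z\) falls strictly below the segment \(P_\mathcal{I}C\), so upward closure does not reach it. For example, take vertices \((0,1),(0,2),(1,0)\) with \(P_\mathcal{I}=(2/5,4/5)\) and multiplicities \(2,1,2\). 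The chore labeling gives \(B=(0,2)\), hence \(Z=(1/2,1/2)\), which lies below the line through \(P_\mathcal{I}\) and \(C\). So ``the same allocations'' is not enough: you must reverse the labeling to ``\(A\) above \(BC\)''. With that choice, when \(A\) is bottom-right it also lies on the upper-right side of the line through \(P_\mathcal{I}\) and \(B\). Cases 2b and 2e then close via \(\operatorname{conv}(P_\mathcal{I},B)+\mathbb{R}^2_{\ge 0}\).

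Separately, when the class meant to receive \(B\) is empty, the condition \(n_B\ge|G|\) holds vacuously and no agent is removed. Your claim that each round removes an agent therefore needs a small patch, e.g.\ let \(i\) itself take a copy of \(B\) when \(n_B\ge 1\). This degenerate case is already glossed over in the chore proof.
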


\begin{restatable}{theorem}{mixed}\label{thm:mixed}
    Given an instance of fair allocation where there are only two types of items; one of which being objective goods, and the other being objective chores, there exists an \(\MMS\) allocation which can be computed in polynomial time.
\end{restatable}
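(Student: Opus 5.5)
We follow the same recursive scheme as the proof of Theorem~\ref{thm:main}, now for a mixed instance. Type A items are objective goods (the \(x\)-axis) and Type B items are objective chores (the \(y\)-axis). Agent \(i\)'s feasible region is \(R_i=\{(x,y)\in\mathbb{R}^2_{\geq 0}:\alpha_i x-\beta_i y\geq \MMS_i\}\), which is the part of the nonnegative quadrant lying to the right of/below the line \(L_i\) of positive slope \(\alpha_i/\beta_i\). Its integer hull is \(IH_i\).

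\textbf{Setup.} First, Observation~\ref{obs:P} holds verbatim: an MMS partition for agent \(i\) gives \(n\) integer points of \(R_i\) summing to \((a,b)\), so by convexity their centroid \(P_\mathcal{I}\) lies in \(IH_i\). Second, we use the mixed-instance version of Lemma~\ref{lemma:IDP} together with the unbounded-region extension of Lemma~\ref{lemma:alg-time} (Theorem 6.1 of \cite{mccormick_polynomial_1997}). These give, for each agent, a unimodular triangle \(\triangle_i(\mathcal{I})\) with integer vertices in \(R_i\) and integral multiplicities summing to \(n\), whose weighted centroid is \(P_\mathcal{I}\). The invariant is the same as before: after assigning some bundles, the new feasibility point lies in \(\bigcap_{j\in N'}IH_j\), so by Lemma~\ref{lemma:IDP} no remaining agent's MMS value decreases and we may induct on \(n\). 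The base case \(n=1\) is immediate.

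\textbf{Geometry and case analysis.} We need the correct ``free'' direction. In the mixed setting, more goods and fewer chores is better, so the quadrant of points dominated by \(P_\mathcal{I}\) is now \emph{bottom-right}: \(x\ge x(P)\), \(y\le y(P)\). Every such point of \(\mathbb{R}^2_{\ge0}\) lies in every \(R_j\). The ``undesirable'' quadrant is \emph{top-left}.

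Case 1 will therefore be: some vertex \(X\) of \(\triangle_i\) lies top-left of \(P_\mathcal{I}\). We give \(X\) to \(i\). The new point \(P_{\mathcal{I}'}\) moves bottom-right of \(P_\mathcal{I}\) and stays in \(\mathbb{R}^2_{\ge0}\) because \(a-x(X)\ge0\) and \(b-y(X)\ge0\). It therefore lies in the region \(\{x\ge x(P),0\le y\le y(P)\}\), which is contained in every \(IH_j\). Membership in \(IH_j\) follows because \(IH_j\) is closed under increasing \(x\) and decreasing \(y\) to \(0\), as \(R_j\cap\mathbb{Z}^2\) is.

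Otherwise, one vertex \(B\) lies top-right and one vertex \(C\) lies bottom-left of \(P_\mathcal{I}\). We then use the mixed partial order \(i\preceq j\iff v_i\le v_j,\ h_i\ge h_j\), meaning \(L_i\) is closest to \(P_\mathcal{I}\), and take \(i\) minimal. We split the other agents by slope into \(S\) and \(G\) and reprove Observation~\ref{obs:steep-gentle}. For one class, \(R_i\cap\{x\ge x(P)\}\subseteq R_j\); for the other, \(R_i\cap\{y\le y(P)\}\subseteq R_k\). Minimality forces the intersection \(I_j\) of \(L_i\) and \(L_j\) to lie on the correct side of \(P_\mathcal{I}\), exactly as before.

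Cases 2a--2e then transfer mechanically. We assign copies of \(B\) to one slope class when \(n_B\) is large enough, or copies of \(C\) to the other when \(n_C\) is. If neither holds, we look at where \(A\) lies. If \(A\) lies in the universally feasible bottom-right quadrant, it can be given to anyone (the analogue of 2c). If not, we pad with copies of \(A\) (the analogues of 2d/2e). In each case we check that the new feasibility point lies on a segment from \(P_\mathcal{I}\) toward a convex combination that is in \(IH_i\), and is in \(IH_k\) for the other class via the polygon with vertices \(P_\mathcal{I}\), \(C\) and their projections to the free boundary. We count bundles using \(n_A+n_B+n_C=|S|+|G|+1\). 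Theorem~\ref{thm:goods} is handled in the same way, with the free quadrant top-right and the order reversed.

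\textbf{Main obstacle.} The main obstacle is the analogue of the polygon argument in Cases 2a/2d. In the chore proof it projected \(P_\mathcal{I}\) and \(C\) down to the \(x\)-axis, using that \(IH_k\) is closed under decreasing coordinates. Here \(IH_k\) is closed only under increasing \(x\) and decreasing \(y\), and it may be unbounded. We must verify that the chosen projections (dropping to \(y=0\) for the class using below-\(P\) containment, or pushing to larger \(x\) for the other) still lie in \(IH_k\) and enclose \(P_{\mathcal{I}'}\). We must also check that the new point stays in the nonnegative quadrant, since integer nonnegativity of the remaining counts guarantees that. We would first try orienting each case so that the relevant region is always bounded by \(P_\mathcal{I}\), the vertex known to be in \(IH_k\), and their projections onto \(y=0\). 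Running time is as in Theorem~\ref{thm:main}.
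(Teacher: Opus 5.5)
Your outline follows the paper's proof essentially step for step. It has the same Case~1 (a vertex top-left of \(P_\mathcal{I}\), with the bottom-right quadrant feasible for everyone), the same order \(\preceq'\), the same split into \(S\) and \(G\) with the analogue of Observation~\ref{obs:steep-gentle}, and the same Cases 2a--2e. The projections are not a problem. The paper drops to \(y=0\) for both classes, using \(\operatorname{conv}\{(x(P_\mathcal{I}),0),P_\mathcal{I},C,(x(C),0)\}\) for \(G\cup\{i\}\) and \(\operatorname{conv}\{(x(P_\mathcal{I}),0),P_\mathcal{I},B,(x(B),0)\}\) for \(S\cup\{i\}\). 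Both are valid because \(IH_k\) is closed under lowering \(y\), which you already used in Case~1. The genuine gap is the step you flag as the main obstacle and leave open: showing that the polygon encloses the new point. That step fails unless you fix the labelling of the triangle's vertices, and you never do.

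Concretely, in the analogue of 2a/2d (copies of \(B\) go to \(S\)), the new point lies on \(\overline{P_\mathcal{I}Z}\) (resp.\ \(\overline{ZC}\)), where \(Z\in\overline{AC}\) and \(P_\mathcal{I}\in\overline{BZ}\).
\begin{itemize}
\item If \(y(A)\le y(P_\mathcal{I})\), then \(A\in R_i^{\text{below}}\subseteq R_k\) is integral, so \(A\in IH_k\) and no polygon is needed.
\item If \(A\) is top-right, \(Z\) lies on the opposite side of line \(CP_\mathcal{I}\) from \(B\), i.e.\ on \(A\)'s side. So \(Z\) lies under segment \(CP_\mathcal{I}\), and hence in your polygon, only if \(A\) is below line \(CP_\mathcal{I}\). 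Otherwise \(Z\) falls in the triangle with vertices \(C\), \((x(C),y(P_\mathcal{I}))\), \(P_\mathcal{I}\). That triangle cannot be reached from \(C,P_\mathcal{I}\) by convexity plus increasing \(x\) and decreasing \(y\). There you only know \(Z\in R_k\), and the observation controls \(R_k\), not \(IH_k\).
\end{itemize}
The missing ingredient is the paper's normalisation that \(A\) lies below line \(BC\) (so \(A,B,C\) are anticlockwise).
\begin{itemize}
\item When \(A\) is top-right, you achieve this by swapping the names of \(A\) and \(B\). Since ray \(CP_\mathcal{I}\) lies between rays \(CA\) and \(CB\), it is then equivalent to \(A\) lying below line \(CP_\mathcal{I}\).
\item Symmetrically, in the 2b/2e analogue (copies of \(C\) go to \(G\)), if \(x(A)\ge x(P_\mathcal{I})\), then \(A\in R_i^{\text{right}}\subseteq R_j\) gives \(A\in IH_j\) directly.
\item If \(A\) is bottom-left, you swap \(A\) and \(C\). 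The same orientation then puts \(A\) below line \(BP_\mathcal{I}\). So the point of \(\overline{AB}\) on ray \(CP_\mathcal{I}\) beyond \(P_\mathcal{I}\) lies under \(\overline{P_\mathcal{I}B}\), inside the \(B\)-polygon.
\item A bottom-right \(A\) lies in every \(IH_j\) and needs no polygon.
\end{itemize}
With this labelling fixed before the case split, your sketch becomes a complete proof.
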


We provide a full proof of the mixed case in Appendix~\ref{sec:mixed}, however we omit a proof of the goods case as the geometric techniques are very similar.

\section{Discussion}

We show the existence (and polynomial-time computation) of MMS allocations for instances with two types of chores, two types of goods, or one type of objective good and one type of objective chore, by leveraging a novel geometric perspective inspired by prior work in job scheduling for two types of jobs.

However, our technique does not extend easily when the number of types of items is \(>2,\) as the feasibility point \(P_\mathcal{I}\) may not necessarily lie within some unimodular lattice simplex under the hyperplane \(L_i\) \citep{mccormick_polynomial_1997}. One way to see this is because Pick's theorem does not extend to dimensions \(d>2,\) so simplices with integer vertices containing no lattice point within them may no longer neccessarily be unimodular. One possible future direction could be to consider ``two-type" valuation curves more general than additive, e.g. \(v_i(S):=f(x,y),\) where \(x\) and \(y\) are the number of Type A and Type B chores in \(S\) respectively.

\bibliographystyle{plainnat}
\bibliography{references}

@inproceedings{aziz2017,
author = {Aziz, Haris and Rauchecker, Gerhard and Schryen, Guido and Walsh, Toby},
title = {Algorithms for max-min share fair allocation of indivisible chores},
year = {2017},
publisher = {AAAI Press},
booktitle = {Proceedings of the Thirty-First AAAI Conference on Artificial Intelligence},
pages = {335–341},
numpages = {7},
location = {San Francisco, California, USA},
series = {AAAI'17}
}

@inproceedings{conf/soda/HeidariKSS26,
author = {Ehsan Heidari and Alireza Kaviani and Masoud Seddighin and AmirMohammad Shahrezaei},
title = {Improved Maximin Share Guarantee for Additive Valuations},
booktitle = {Proceedings of the 2026 Annual ACM-SIAM Symposium on Discrete Algorithms (SODA)},
year = {2026},
chapter = {},
pages = {2239-2290},
doi = {10.1137/1.9781611978971.81},
URL = {https://epubs.siam.org/doi/abs/10.1137/1.9781611978971.81}
}

@inproceedings{conf/soda/AkramiG24,
author = {Hannaneh Akrami and Jugal Garg},
title = {Breaking the 3/4 Barrier for Approximate Maximin Share},
booktitle = {Proceedings of the 2024 Annual ACM-SIAM Symposium on Discrete Algorithms (SODA)},
chapter = {},
year = {2024},
pages = {74-91},
doi = {10.1137/1.9781611977912.4},
URL = {https://epubs.siam.org/doi/abs/10.1137/1.9781611977912.4}
}

@article{journals/mor/GhodsiHSSY21,
author = {Ghodsi, Mohammad and Hajiaghayi, Mohammad Taghi and Seddighin, Masoud and Seddighin, Saeed and Yami, Hadi},
title = {Fair Allocation of Indivisible Goods: Improvement},
journal = {Mathematics of Operations Research},
volume = {46},
number = {3},
pages = {1038-1053},
year = {2021},
doi = {10.1287/moor.2020.1096},
URL = {https://doi.org/10.1287/moor.2020.1096}
}

@inproceedings{ebadian2022,
author = {Ebadian, Soroush and Peters, Dominik and Shah, Nisarg},
title = {How to Fairly Allocate Easy and Difficult Chores},
year = {2022},
isbn = {9781450392136},
publisher = {International Foundation for Autonomous Agents and Multiagent Systems},
address = {Richland, SC},
booktitle = {Proceedings of the 21st International Conference on Autonomous Agents and Multiagent Systems},
pages = {372–380},
numpages = {9},
location = {Virtual Event, New Zealand},
series = {AAMAS '22}
}

@article{amanatidis2017,
author = {Amanatidis, Georgios and Markakis, Evangelos and Nikzad, Afshin and Saberi, Amin},
title = {Approximation Algorithms for Computing Maximin Share Allocations},
year = {2017},
issue_date = {October 2017},
publisher = {Association for Computing Machinery},
address = {New York, NY, USA},
volume = {13},
number = {4},
issn = {1549-6325},
url = {https://doi.org/10.1145/3147173},
doi = {10.1145/3147173},
journal = {ACM Trans. Algorithms},
month = dec,
articleno = {52},
numpages = {28}
}

@misc{zhang2026exactmmsallocationspersonalized,
      title={Exact MMS Allocations under Personalized Bivalued Valuations: Goods and Chores}, 
      author={Yuhao Zhang},
      year={2026},
      eprint={2608.15822},
      archivePrefix={arXiv},
      primaryClass={cs.GT},
      url={https://arxiv.org/abs/2608.15822}, 
}

@article{marcotte1985cutting,
  author  = {Marcotte Odile},
  title   = {The Cutting Stock Problem and Integer Rounding},
  journal = {Mathematical Programming},
  volume  = {33},
  pages   = {82--92},
  year    = {1985},
  doi     = {10.1007/BF01582013}
}

@article{budish2011,
 ISSN = {00223808, 1537534X},
 author = {Eric Budish},
 journal = {Journal of Political Economy},
 number = {6},
 pages = {1061--1103},
 publisher = {The University of Chicago Press},
 title = {The Combinatorial Assignment Problem: Approximate Competitive Equilibrium from Equal Incomes},
 urldate = {2026-05-09},
 volume = {119},
 year = {2011}
}

@misc{huang2025fptas79approximationmaximinshare,
      title={An FPTAS for 7/9-Approximation to Maximin Share Allocations}, 
      author={Xin Huang and Shengwei Zhou},
      year={2025},
      eprint={2511.13056},
      archivePrefix={arXiv},
      primaryClass={cs.GT},
      url={https://arxiv.org/abs/2511.13056}, 
}

@article{viswanathan_general_2025,
	title = {A {General} {Framework} for {Fair} {Allocation} under {Matroid} {Rank} {Valuations}},
	volume = {13},
	issn = {2167-8375},
	url = {https://dl.acm.org/doi/10.1145/3728367},
	doi = {10.1145/3728367},
	number = {3},
	urldate = {2026-05-08},
	journal = {ACM Trans. Econ. Comput.},
	author = {Viswanathan, Vignesh and Zick, Yair},
	month = sep,
	year = {2025},
	pages = {14:1--14:32},
}

@inproceedings{barman_fair_2023,
	series = {{AAMAS} '23},
	title = {Fair {Chore} {Division} under {Binary} {Supermodular} {Costs}},
	url = {https://dl.acm.org/doi/10.5555/3545946.3599104},
	urldate = {2026-05-08},
	booktitle = {Proceedings of the 2023 {International} {Conference} on {Autonomous} {Agents} and {Multiagent} {Systems}},
	author = {Barman, Siddharth and Narayan, Vishnu and Verma, Paritosh},
	month = may,
	year = {2023},
	pages = {2863--2865},
}

@inproceedings{barman_existence_2021,
	series = {{AAMAS} '21},
	title = {Existence and {Computation} of {Maximin} {Fair} {Allocations} {Under} {Matroid}-{Rank} {Valuations}},
	url = {https://dl.acm.org/doi/10.5555/3463952.3463978},
	urldate = {2026-05-08},
	booktitle = {Proceedings of the 20th {International} {Conference} on {Autonomous} {Agents} and {MultiAgent} {Systems}},
	author = {Barman, Siddharth and Verma, Paritosh},
	month = may,
	year = {2021},
	pages = {169--177},
}

@inproceedings{gorantla_fair_2023,
	title = {Fair allocation of a multiset of indivisible items},
	url = {https://epubs.siam.org/doi/10.1137/1.9781611977554.ch13},
	urldate = {2026-05-08},
	booktitle = {{ACM}-{SIAM} {Symposium} on {Discrete} {Algorithms} ({SODA})},
	author = {Gorantla, Pranay and Marwaha, Kunal and Velusamy, Santhoshini},
	month = jan,
	year = {2023},
	doi = {10.1137/1.9781611977554.ch13},
	pages = {304--331},
}

@article{aziz_fair_2021,
	title = {Fair allocation of indivisible goods and chores},
	volume = {36},
	issn = {1573-7454},
	url = {https://doi.org/10.1007/s10458-021-09532-8},
	doi = {10.1007/s10458-021-09532-8},
	language = {en},
	number = {1},
	urldate = {2026-05-08},
	journal = {Autonomous Agents and Multi-Agent Systems},
	author = {Aziz, Haris and Caragiannis, Ioannis and Igarashi, Ayumi and Walsh, Toby},
	month = nov,
	year = {2021},
	pages = {3},
}

@article{kurokawa_fair_2018,
	title = {Fair {Enough}: {Guaranteeing} {Approximate} {Maximin} {Shares}},
	volume = {65},
	issn = {0004-5411},
	shorttitle = {Fair {Enough}},
	url = {https://dl.acm.org/doi/10.1145/3140756},
	doi = {10.1145/3140756},
	number = {2},
	urldate = {2026-05-08},
	journal = {J. ACM},
	author = {Kurokawa, David and Procaccia, Ariel D. and Wang, Junxing},
	month = feb,
	year = {2018},
	pages = {8:1--8:27},
}

@inproceedings{aziz_fair_2023,
	series = {{AAMAS} '23},
	title = {Fair {Allocation} of {Two} {Types} of {Chores}},
	urldate = {2025-10-28},
	booktitle = {Proceedings of the 2023 {International} {Conference} on {Autonomous} {Agents} and {Multiagent} {Systems}},
	author = {Aziz, Haris and Lindsay, Jeremy and Ritossa, Angus and Suzuki, Mashbat},
	month = may,
	year = {2023},
	pages = {143--151},
}

@article{feige_maximin_2022,
	title = {Maximin fair allocations with two item values},
	url = {https://www.wisdom.weizmann.ac.il/~feige/mypapers/MMSab.pdf},
	urldate = {2025-10-28},
	journal = {Unpublished Manuscript},
	author = {Feige, Uriel},
	year = {2022},
}

@inproceedings{feige_tight_2021,
	title = {A {Tight} {Negative} {Example} for {MMS} {Fair} {Allocations}},
	url = {https://doi.org/10.1007/978-3-030-94676-0_20},
	doi = {10.1007/978-3-030-94676-0_20},
	urldate = {2025-10-28},
	booktitle = {Proceedings of the 17th International Conference on Web and {Internet} {Economics} {(WINE)}},
	author = {Feige, Uriel and Sapir, Ariel and Tauber, Laliv},
	month = dec,
	year = {2021},
    pages = {355--372}
}

@inproceedings{garg_improved_2025,
	title = {Improved maximin share approximations for chores by bin packing},
	url = {https://doi.org/10.1609/aaai.v39i13.33518},
	doi = {10.1609/aaai.v39i13.33518},
	urldate = {2025-10-28},
	booktitle = {Proceedings of the {Thirty}-{Ninth} {AAAI} {Conference} on {Artificial} {Intelligence}},
	author = {Garg, Jugal and Huang, Xin and Segal-Halevi, Erel},
	month = aug,
	year = {2025},
	pages = {13881--13888},
}

@inproceedings{huang_reduction_2023,
	series = {{EC} '23},
	title = {A {Reduction} from {Chores} {Allocation} to {Job} {Scheduling}},
	url = {https://doi.org/10.1145/3580507.3597676},
	doi = {10.1145/3580507.3597676},
	urldate = {2025-10-28},
	booktitle = {Proceedings of the 24th {ACM} {Conference} on {Economics} and {Computation}},
	author = {Huang, Xin and Segal-Halevi, Erel},
	month = jul,
	year = {2023},
	pages = {908},
}

@inproceedings{mccormick_polynomial_1997,
	series = {{SODA} '97},
	title = {Polynomial algorithms for multiprocessor scheduling with a small number of job lengths},
	urldate = {2025-10-26},
	booktitle = {Proceedings of the eighth annual {ACM}-{SIAM} symposium on {Discrete} algorithms},
	author = {McCormick, S. Thomas and Smallwood, Scott R. and Spieksma, Frits C. R.},
	month = jan,
	year = {1997},
	pages = {509--517},
}

@inproceedings{huang_algorithmic_2021,
	series = {{EC} '21},
	title = {An {Algorithmic} {Framework} for {Approximating} {Maximin} {Share} {Allocation} of {Chores}},
	url = {https://doi.org/10.1145/3465456.3467555},
	doi = {10.1145/3465456.3467555},
	urldate = {2025-10-21},
	booktitle = {Proceedings of the 22nd {ACM} {Conference} on {Economics} and {Computation}},
	author = {Huang, Xin and Lu, Pinyan},
	month = jul,
	year = {2021},
	pages = {630--631},
}

@inproceedings{aziz_developments_2020,
	title = {Developments in Multi-Agent Fair Allocation},
	url = {https://doi.org/10.1609/aaai.v34i09.7082},
	doi = {10.1609/aaai.v34i09.7082},
	urldate = {2025-10-28},
	booktitle = {Proceedings of the {Thirty-Fourth} {AAAI} {Conference} on {Artificial} {Intelligence}},
	author = {Aziz, Haris},
	month = apr,
	year = {2020},
	pages = {13563--13568},
}

@article{10.1016/j.artint.2023.103965,
author = {Amanatidis, Georgios and Aziz, Haris and Birmpas, Georgios and Filos-Ratsikas, Aris and Li, Bo and Moulin, Herv\'{e} and Voudouris, Alexandros A. and Wu, Xiaowei},
title = {Fair division of indivisible goods: Recent progress and open questions},
year = {2023},
issue_date = {Sep 2023},
publisher = {Elsevier Science Publishers Ltd.},
address = {GBR},
volume = {322},
number = {C},
issn = {0004-3702},
url = {https://doi.org/10.1016/j.artint.2023.103965},
doi = {10.1016/j.artint.2023.103965},
journal = {Artif. Intell.},
month = sep,
numpages = {25}
}

@article{liu2024mixed,
  author    = {Shengxin Liu and Xinhang Lu and Mashbat Suzuki and Toby Walsh},
  title     = {Mixed Fair Division: A Survey},
  journal   = {Proceedings of the AAAI Conference on Artificial Intelligence},
  volume    = {38},
  number    = {20},
  pages     = {22641--22649},
  year      = {2024},
  month     = mar,
  doi       = {10.1609/aaai.v38i20.30274},
  url       = {https://doi.org/10.1609/aaai.v38i20.30274}
}

@article{gates_helpful_2020,
  author  = {Gates, Victoria and Griffiths, Thomas L. and Dragan, Anca D.},
  title   = {How to Be Helpful to Multiple People at Once},
  journal = {Cognitive Science},
  volume  = {44},
  number  = {6},
  pages   = {e12841},
  year    = {2020},
  doi     = {10.1111/cogs.12841},
}

\appendix

\section{Appendix} \label{sec:appendix}

\subsection{Integer decomposition lemmas} \label{sec:appendix-IDP}

We prove Lemma~\ref{lemma:IDP} by including a proof of a more general statement, Lemma~\ref{lemma:IDP-general}; which was also established in \cite{mccormick_polynomial_1997}. We use the latter lemma for the proof of Theorem~\ref{thm:mixed}. These lemmas are related to what is known as the \textit{integer decomposition property} \cite{marcotte1985cutting}.

\begin{lemma}[Adapted from Corollary 3.6, Theorem 6.1 in \cite{mccormick_polynomial_1997}]\label{lemma:IDP-general}
Let \(P=(a/n,b/n)\) for integers \(a,b,n.\) Let \(R\subseteq \mathbb{R}^2\) be a (possibly unbounded) polyhedron, and let \(H\) be its integer hull. The following statements are equivalent.
\begin{enumerate}[label=(\roman*)]
    \item \(P\in H\)
    \item \(P\) lies in some unimodular (area 1/2) triangle with vertices in \(H\cap \mathbb{Z}^2\)
    \item There is a triplet of integral points \((x_1,y_1),(x_2,y_2),(x_3,y_3)\in H\cap\mathbb{Z}^2\) and integral multiplicities \(n_1,n_2,n_3\in \mathbb{Z}_{\geq 0}\) such that \(n_1(x_1,y_1)+n_2(x_2,y_2)+n_3(x_3,y_3)=(a,b)\) and \(n_1+n_2+n_3=n.\)
\end{enumerate}
\end{lemma}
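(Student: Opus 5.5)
We prove the cycle of implications (iii)\(\Rightarrow\)(i)\(\Rightarrow\)(ii)\(\Rightarrow\)(iii).

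\emph{(iii)\(\Rightarrow\)(i).} This is immediate. \(P=\sum_k \frac{n_k}{n}(x_k,y_k)\) is a convex combination of integer points of \(H\), so \(P\in H\) because \(H\) is convex.

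\emph{(i)\(\Rightarrow\)(ii).} Since \(H\) is the integer hull of a rational polyhedron, it is itself a polyhedron whose vertices are integral; in the unbounded case, use Meyer's theorem for rational \(R\). Hence \(P\) lies in a bounded lattice polygon \(Q\subseteq H\) with integer vertices. Any lattice polygon can be triangulated into lattice triangles. Each lattice triangle can be refined further: as long as it contains a lattice point other than its vertices, split it at that point. This process terminates, and at the end every triangle has no lattice points except its three vertices. By Pick's theorem, such an empty lattice triangle has area \(0+3/2-1=1/2\), so it is unimodular. The point \(P\) lies in one of these triangles, and its vertices belong to \(H\cap\mathbb{Z}^2\).

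\emph{(ii)\(\Rightarrow\)(iii).} This is the key step. Let \(V_1,V_2,V_3\) be the vertices of a unimodular triangle containing \(P\). Unimodularity means \(V_2-V_1\) and \(V_3-V_1\) form a basis of \(\mathbb{Z}^2\). The vector \(nP-nV_1=(a,b)-nV_1\) is integral, so there are integers \(n_2,n_3\) with
\[(a,b)-nV_1=n_2(V_2-V_1)+n_3(V_3-V_1).\]
Set \(n_1=n-n_2-n_3\). Then \(n_1V_1+n_2V_2+n_3V_3=(a,b)\) and \(n_1+n_2+n_3=n\). Dividing by \(n\), the triple \((n_1,n_2,n_3)/n\) is the barycentric coordinate vector of \(P\) with respect to the triangle. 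Barycentric coordinates are unique, and \(P\) lies in the triangle, so all \(n_k\ge 0\).

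The main obstacle is the (i)\(\Rightarrow\)(ii) step in the unbounded or mixed settings. There we must justify that \(H\) is a polyhedron with integral vertices, so that the triangulation argument applies. For rational \(R\) this is Meyer's theorem. If the defining coefficients \(\alpha_i,\beta_i\) are irrational, we instead argue directly: \(P\in H\) means \(P\) is a finite convex combination of integer points of \(R\), and we triangulate the convex hull of those finitely many points. This sidesteps polyhedrality altogether.
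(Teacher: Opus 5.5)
Your proof is correct and follows essentially the same route as the paper. You use the same cycle of implications and the same triangulate-and-refine argument with Pick's theorem for (i)$\Rightarrow$(ii). Your passage to the hull of finitely many integer points handles unbounded $H$ more cleanly than the paper does, and it makes Meyer's theorem unnecessary. For (ii)$\Rightarrow$(iii), your lattice-basis/barycentric argument is a coordinate-free version of the paper's Cramer's-rule computation with the unimodular matrix $W$.

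One caveat applies to the paper as well. If those finitely many integer points are collinear, you must add a point of $H\cap\mathbb{Z}^2$ off their line, and this is possible only when $H$ is two-dimensional. For degenerate $H$, e.g.\ $R=\{(x,y):0\le y\le 1/2\}$ with $b=0$, (ii) fails while (i) and (iii) hold, so the lemma implicitly needs $\dim H=2$.
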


\begin{proof}[Proof of Lemma~\ref{lemma:IDP-general}]
(i)\(\implies\)(ii): Observe that \(H\) is a lattice polyhedron which can be triangulated such that no triangle contains a lattice point within its interior (to see this, if any triangle contains an interior lattice point, connect that point to the three vertices. If any edge contains additional lattice points, subdivide along these points. This refinement must terminate). By Pick's theorem these triangles have area 1/2. 

(ii)\(\implies\) (iii): Suppose \(P\) lies within the unimodular triangle \(T\) with vertices \((x_1,y_1),(x_2,y_2),(x_3,y_3).\) Define
\[
W:= \begin{bmatrix}
    x_1 & x_2 & x_3 \\
    y_1 & y_2 & y_3 \\
    1 & 1 & 1
\end{bmatrix}
\]
The area of the triangle \(T\) is given by \(\frac{1}{2}|\det W|.\) Since \(T\) has area \(\frac{1}{2}\) by definition, then \(\det W=\pm 1,\) i.e. \(W\) is a unimodular matrix.

Since \(P=(\frac{a}{n},\frac{b}{n})\) lies within \(T,\) for some \(\lambda_1,\lambda_2,\lambda_3\in\mathbb{R}_{\geq 0},\) we can write
\[
W\cdot \begin{bmatrix}
    \lambda_1 \\ \lambda_2 \\ \lambda_3 
\end{bmatrix} = \begin{bmatrix}
    a/n \\ b/n \\ 1
\end{bmatrix} \implies W\cdot \begin{bmatrix}
    n\lambda_1 \\ n\lambda_2 \\ n\lambda_3
\end{bmatrix} = \begin{bmatrix}
    a \\ b \\ n
\end{bmatrix}.
\]
Now, since we know that \(W\) has integral entries, \(\det W=\pm 1,\) and \(a,b,n\) are integers, then by applying Cramer's Rule, \(n\lambda_1,n\lambda_2,n\lambda_3\) are integral. Then set \(n_1:=n\lambda_1,n_2:=n\lambda_2, n_3:=n\lambda_3.\)

(iii)\(\implies\) (i): By definition, each \((x_j,y_j)\in H\cap \mathbb{Z}^2,\) and \(\sum_{i=1}^n(x_j,y_j)=(a,b).\) Then, by convexity, their centroid \(P=(a/n,b/n)\) lies in \(H.\) 
\end{proof}

\begin{proof}[Proof of Lemma~\ref{lemma:IDP}]
    Since \(IH_i\) forms a lattice polyhedron, we may apply Lemma~\ref{lemma:IDP-general} by substituting \(P:=P_\mathcal{I},\) and \(H:=IH_i\) for any \(i\in N.\)
\end{proof}

The following helps us construct the unimodular triangle of Lemma~\ref{lemma:IDP-general} (ii) in polynomial time. It is a generalisation of Lemma~\ref{lemma:alg-time} which we use to prove Theorem~\ref{thm:mixed}.

\begin{lemma}[Theorem 6.1 of \cite{mccormick_polynomial_1997}]\label{lemma:alg-time-general}
Let \(R\subseteq\mathbb{R}^2\) be a (possibly unbounded) polyhedron defined by \(s\) inequalities, with largest coefficient \(A\), and let \(H\) be the integer hull of \(R\). For any \(P\in H,\) a unimodular triangle with vertices in \(H\cap\mathbb{Z}^2\) containing \(P\) can be found in time
\(        O\bigl(s\log A+(\log A)^2\bigr).
   \)
\end{lemma}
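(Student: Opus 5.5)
The plan is to proceed in two phases: first locate $P$ inside some lattice triangle with vertices in $H\cap\mathbb{Z}^2$, then refine that triangle into a unimodular one that still contains $P$. Throughout, the only arithmetic is Euclid-type arithmetic on numbers of bit-length $O(\log A)$.

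\emph{Phase 1: computing the integer hull near $P$.} We use the classical fact that the integer hull of a planar polyhedron given by $s$ inequalities with coefficients bounded by $A$ has $O(s\log A)$ vertices. These vertices can be listed in $O(s\log A)$ time by walking along each facet of $R$. For a facet with normal $(\alpha,\beta)$, the lattice points closest to it form the convergent/semiconvergent sequence of the continued fraction of $\beta/\alpha$. This gives $O(\log A)$ hull vertices per facet, each obtained from the previous one by a constant number of arithmetic operations.

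If $R$ is unbounded, we truncate it with a bounding box large enough to contain $P$ and the relevant hull vertices; the recession directions remain rational with coefficients bounded by $A$. Once the hull boundary is listed, we pick a hull vertex $V_0$. Among the fan triangles $V_0V_jV_{j+1}$ we find one containing $P$, using a linear scan or binary search over angles; this is within the $O(s\log A)$ budget. Call this lattice triangle $T\subseteq H$.

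\emph{Phase 2: refining $T$ to a unimodular triangle.} Naive refinement (repeatedly splitting at interior or boundary lattice points) could take time linear in the area of $T$, so we instead use unimodular changes of coordinates.

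First we make an edge primitive. Take an edge $UV$ of $T$ and write $V-U=g\cdot w$ with $w$ primitive. Using $\mathrm{ext\text{-}gcd}$ we find a unimodular map sending $U\mapsto 0$ and $w\mapsto(1,0)$. The triangle becomes $(0,0),(g,0),(p,q)$ with $q>0$.

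Next we cut along the edge. The lattice points on the segment from $(0,0)$ to $(g,0)$ split $T$ into $g$ triangles with apex $(p,q)$. The one containing $P$ is found by a single division, namely by intersecting the ray from $(p,q)$ through $P$ with the $x$-axis. This leaves a triangle of the form $(0,0),(1,0),(p,q)$, which has area $q/2$.

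Then we reduce $q$. A shear $x\mapsto x-\lfloor p/q\rfloor y$ lets us assume $0\le p<q$. The lattice points on the horizontal lines $y=1,\dots,q-1$ inside this thin triangle lie along a single lattice line, determined by $p^{-1}\bmod q$. Cutting along the appropriate point, again chosen by locating $P$ with $O(1)$ comparisons, yields a triangle whose ``height'' parameter is replaced by a remainder in the Euclidean algorithm on $(p,q)$. After $O(\log A)$ such steps we reach area $1/2$.

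Each step costs $O(\log A)$ bit operations on numbers of size $\mathrm{poly}(A)$, so Phase 2 costs $O((\log A)^2)$ in total. All vertices created are lattice points of the original $T\subseteq H$, so they lie in $H\cap\mathbb{Z}^2$. Pulling back by the unimodular maps preserves both unimodularity and containment of $P$.

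The main obstacle is Phase 2, the third step: showing that a single cut really reduces the problem to a Euclid step, i.e.\ that after the cut the triangle containing $P$ is again of the normalized form, with $q$ replaced by something like $q\bmod p$ or $p$. My first attempt would be to argue via Pick's theorem. The triangle $(0,0),(1,0),(p,q)$ has no interior lattice points only if $q=1$. Otherwise, the lattice point in its closure nearest the edge $y=0$ (the point realizing $\min\{y: \exists x,\ (x,y)\in T\cap\mathbb{Z}^2\}$) splits it into three lattice triangles. Two of these have an edge that is a primitive vector of a continued-fraction convergent of $p/q$, and these give the Euclid recursion. If this bookkeeping becomes messy, a fallback is to cite the corresponding step of Theorem~6.1 in \cite{mccormick_polynomial_1997} directly, since the lemma is a restatement of it.
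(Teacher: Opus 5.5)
\textbf{There is a genuine gap, in Phase 2, at exactly the step you flagged.} For comparison, the paper gives no proof of this lemma. It imports it from McCormick, Smallwood and Spieksma, so your fallback of citing their Theorem~6.1 is what the paper does, though as a proof of the lemma it would be circular.

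\textbf{The collinearity claim is false.} You assert that, after the shear, the lattice points of $\mathrm{conv}\{(0,0),(1,0),(p,q)\}$ at heights $1,\dots,q-1$ lie on a single lattice line. Take $(p,q)=(3,7)$. All three edges are primitive, and Pick's theorem gives three interior lattice points: $(1,1)$, $(1,2)$, $(2,4)$. These are not collinear, and no unimodular change of coordinates can make them so.

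\textbf{Your repair does not give a Euclid recursion.} Splitting at the lowest lattice point above the base and recursing is a valid refinement, but it is not fast. Take $(p,q)=(2,q)$ with $q$ odd. The lowest point is $(1,1)$, and the piece $\mathrm{conv}\{(0,0),(1,1),(2,q)\}$ has area $(q-2)/2$. The unimodular map $(x,y)\mapsto(x,y-x)$ sends this piece to $\mathrm{conv}\{(0,0),(1,0),(2,q-2)\}$, which is again in your normal form with $q$ lowered by only $2$. So if $P$ lies in $\mathrm{conv}\{(0,0),(1,\lfloor q/2\rfloor),(2,q)\}$, you perform about $q/2$ rounds. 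That is time linear in $A$, not $O(\log A)$.

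\textbf{What is missing.} You need an operation that disposes of a whole run of collinear lattice points at once. You also need a proof that the triangle left over (the one containing $P$) is again in normal form, with parameters changed as in one step of the Euclidean algorithm. In the $(2,q)$ example this works as follows:
\begin{itemize}
\item $T$ is the union of a fan of unimodular triangles from $(0,0)$ onto $(1,0),(1,1),\dots,(1,\lfloor q/2\rfloor)$;
\item a second such fan from $(2,q)$ onto the same points;
\item one leftover triangle of area at most $1/2$.
\end{itemize}
The fan triangle containing $P$ is then found with a single division. The $(3,7)$ example shows that in general the lattice points are spread over several such runs (there, the lines $x=1$ and $y=2x$), governed by a continued-fraction expansion. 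Organizing the recursion so that only $O(\log A)$ runs are visited is the real content of the cited theorem. Without it, your $O((\log A)^2)$ bound for Phase 2 is unsupported.

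\textbf{A smaller point.} In the unbounded case your bounding box depends on $P$. After truncation, neither the coefficients nor the size of the fan triangle $T$ are bounded in terms of $A$ alone, so the stated running time does not follow as written.
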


\section{MMS for mixed items} \label{sec:mixed}

We show that we can find in polynomial time MMS allocations for instances where there is only one type of objective good (positive utility to all agents), and one type of objective chore (negative utility to all agents).

For each agent, we use the regular valuation function \(v_i:M\to \mathbb{R}\) for agent \(i.\) An item \(o\) is a good (resp. chore) for agent \(i\) if \(v_i(o)>0\) \((v_i(o)<0).\) Recall the maximin share of an agent is defined by
\[
\MMS_i(M):=\max_{(X_i)_{i\in N}\in \Pi }\min_{j\in N}v_i(X_j),
\]
so an MMS allocation is one where each agents receives \textit{at least} their MMS value. We re-define MMS-feasibility to represent if the value of a bundle \(B\) for some agent \textit{exceeds} their MMS value.

When there is a fair division instance with one type of \textit{objective good} and one type of \textit{objective chore}, WLOG let Type A refer to the goods and Type B refer to the chores. Define \(v_i(o):=\alpha_i>0\) if \(o\) is a good, and \(v_i(o):=-\beta_i<0\) if \(o\) is a chore. Then let \(R_i\) refer to the nonnegative region lying \textit{below} the line \(\alpha_ix-\beta_iy=\MMS_i\) where \(\alpha_i,\beta_i>0.\) Lemma~\ref{lemma:IDP-general} is the generalisation of Lemma~\ref{lemma:IDP} (for the chores setting), which can be applied to mixed items.

Although the regions \(R_i\) may now be unbounded, by Lemma~\ref{lemma:alg-time-general} it is still possible to find a unimodular triangle with vertices in \(R_i\cap \mathbb{Z}^2\) containing \(P_\mathcal{I}\) in polynomial time. Note also that the proof of Observation~\ref{obs:P} still holds:

\obsP*
  \begin{proof}
Observe that given instance \(\mathcal{I}\), any MMS allocation for agent \(i\in N\) corresponds to \(n\) points \((x_1,y_1),\dots,(x_n,y_n)\in R_i\cap \mathbb{Z}^2\) such that \(\sum_{i=1}^n(x_i,y_i)=(a,b).\) Then, by convexity, their centroid \(P\) lies in \(IH_i.\)
\end{proof}

Finally, we retain the same notation and call the triangle computed by Lemma~\ref{lemma:alg-time-general} as the ``MMS triangle" for agent \(i,\) denoted by \(\triangle_i(\mathcal{I}).\) We denote by \(\mu_i(\mathcal{I})\) the corresponding multiplicities of its vertices.

Now, we are ready to prove the following result.

\mixed*

\begin{proof}
If \(n=1,\) then there is a single agent \(i\in N\) with \(P_\mathcal{I}\in IH_i.\) Then we may simply assign the bundle \(P_\mathcal{I}\) to agent \(i\) since it is MMS-feasible for them. Assume the theorem holds for all \(n'<n,\) and suppose \(n>1.\)

\paragraph{Partial ordering of agents.} Recall \(v_i\) is the value of \(y\) where \(\alpha_i\cdot x(P_\mathcal{I})+\beta_i\cdot y=\MMS_i\) and \(h_i\) is the value of \(x\) where \(\alpha_i\cdot x+\beta_i\cdot y(P_\mathcal{I})=\MMS_i.\) Define the partial order \(\preceq'\) on the agents \(N\) so that for \(i,j\in N,\)
\[
i\preceq' j \iff v_i\leq v_j\text{ and }h_i\geq h_j.
\]

\begin{figure}[htbp]
    \centering
    \resizebox{0.4\linewidth}{!}{
    \begin{tikzpicture}[scale=1, >=stealth]
    \draw[->] (0,0) -- (6.5,0) node[right] {$x$};
    \draw[->] (0,0) -- (0,6.5) node[above] {$y$};

    \coordinate (Pnew) at (3,3);
    \fill (Pnew) circle (1pt) node[below right]{\(P_\mathcal{I}\)};

    \draw[dashed] (3,0) node[below]{\(x(P_\mathcal{I})\)} -- (3,6);
    \draw[dashed] (0,3) node[left]{\(y(P_\mathcal{I})\)} -- (6,3);
    
    \coordinate (I_j) at (0.8*6/2.8, -0.8*6/2.8 + 4);

    \draw[green!70!black, thick] (3/4,0) -- (15/4,6) node[pos=0.8, right, green!70!black] {\(L_i\)}; 
    
    \draw[dashed, green!70!black] (3,9/2) -- (0,9/2) node[pos=1, left, green!70!black] {\(v_i\)}; 
    
    \draw[dashed, green!70!black] (9/4,3) -- (9/4,0) node[pos=1, below, green!70!black] {\(h_i\)}; 
    
    \draw[red, thick] (1/8,0) -- (25/8,6) node[pos=0.8, left, red] {\(L_j\)}; 

    \draw[dashed, red] (3,23/4) -- (0,23/4) node[pos=1, left, red] {\(v_j\)}; 

    \draw[dashed, red] (13/8, 3) -- (13/8,0) node[pos=1, below, red] {\(h_j\)}; 

    \draw[blue, thick] (0,8/3) -- (6,14/3) node[pos=0.8, above, blue] {\(L_k\)}; 

    \draw[dashed, blue] (3,11/3) -- (0, 11/3) node[pos=1, left, blue] {\(v_k\)}; 

    \draw[dashed, blue] (1,3) -- (1,0) node[pos=1, below, blue] {\(h_k\)}; 

    \end{tikzpicture}
    }
    \caption{\(L_i\preceq' L_j.\) \((L_i,L_j)\) and \((L_j,L_k)\) are incomparable. If \(x > x(P_\mathcal{I}),\) then \(L_i\) is under \(L_j.\) If \(y<y(P_\mathcal{I}),\) then \(L_i\) is under \(L_k.\)}\label{fig:partial-order-mixed}
    \end{figure}

    Depending on the geometric configuration of vertices of \(\triangle_i(\mathcal{I})\) (i.e. whether they lie ``top-right"/``bottom-left"/etc. of \(P_\mathcal{I}\)) and their corresponding multiplicities. We assign these bundles to different agents case-by-case.

By Observation~\ref{obs:P}, note that we know the feasibility point \(P_\mathcal{I}\in \cap_{i=1}^nIH_i\) for all \(i\in N.\)

Let \(i\in N\) be a minimal agent under \(\preceq'.\) Now let \(A,B,C\) be the vertices of agent \(i\)'S MMS triangle \(\triangle_i(\mathcal{I}),\) and \(n_A,n_B,n_C\in \mu_i(\mathcal{I})\) their corresponding multiplicities. The procedure works recursively by allocating bundles \(A,B\) or \(C\) (possibly more than 1 of each) to satisfy some subset of agents, while ensuring that \(\MMS_i(\mathcal{I}')\geq \MMS_i(\mathcal{I})\) for all \(i\in N'\), where \(\mathcal{I}'\) is the remaining instance and \(N'\) is the remaining set of agents. We can then recurse by finding an MMS allocation for \(\mathcal{I}'.\)

In the allocation step, it suffices to show that after removing the corresponding bundles and agents, the new feasibility point, \(P_{\mathcal{I}'}\) remains in \(\cap_{i\in N'}IH_i.\) This is because Lemma~\ref{lemma:IDP-general} would imply that \(\MMS_i(\mathcal{I}')\geq \MMS_i(\mathcal{I}).\)

There are 5 cases in total (Case 1, and four subcases for Case 2). We prove the correctness of each case separately. \\

\noindent\textbf{Case 1:} Some \(X\in \{A,B,C\}\) lies ``top-left" of \(P_\mathcal{I}.\)

\begin{figure}[htbp]
    \centering
    \resizebox{0.4\linewidth}{!}{
    \begin{tikzpicture}[scale=1, >=stealth]
    
    \draw[->] (0,0) -- (6,0) node[right] {$x$};
    \draw[->] (0,0) -- (0,6) node[above] {$y$};
    
    \coordinate (P) at (3,4);
    \fill (P) circle (1pt) node[above right]{\(P_\mathcal{I}\)};
    
    \draw[dashed] (6,4) -- (0,4) node[left]{\(y(P_\mathcal{I})\)};
    \draw[dashed] (3,6) -- (3,0) node[below]{\(x(P_\mathcal{I})\)};
    
    \coordinate (X) at (1,5);
    \coordinate (P') at (4,3.5);
    \fill (X) circle (1pt) node[above left]{\(X\)};
    
    \draw[dashed] (X) -- (P');
    \fill (P') circle (1pt) node[right]{\(P_{\mathcal{I}'}\)};
    
    \coordinate (O) at (0,0);
    \fill (O) circle (1pt) node[below left]{\(O\)};
    
    \begin{scope}[on background layer]
        \fill[green!15] (3,4) -- (6,4) -- (6,0) -- (3,0) -- cycle;
    \end{scope}
    
    \end{tikzpicture}
    }
    \caption{(Case 1). Upon assigning bundle \(X\), the feasibility point moves from \(P_\mathcal{I}\) to \(P_{\mathcal{I}'}.\)} \label{fig:mixed-case-1-illustrative}
\end{figure}

\emph{Allocation.} Assign \(X\) to agent \(i.\) The remaining instance is \(\mathcal{I}'=(a-x(X),b-y(X),N\setminus \{i\}).\)

\emph{Correctness.} See Fig~\ref{fig:mixed-case-1-illustrative}. Since \(X\in R_i\cap \mathbb{Z}^2,\) it is \(\MMS\)-feasible for agent \(i\) so we may assign them bundle \(X.\) For the remaining instance \(\mathcal{I}',\) the new feasibility point \(P_{\mathcal{I}'}=(\frac{a-x(X)}{n-1},\frac{b-y(X)}{n-1})\) will lie away from \(P_\mathcal{I}\) in the direction of \(\overrightarrow{XP_\mathcal{I}},\) which is ``bottom-right" of \(P_\mathcal{I}.\) Since \(a-x(X)\geq 0\) and \(b-y(X)\geq 0,\) \(P_{\mathcal{I}'}\) lies within the bottom-right region defined by the lines \(x=x(P_\mathcal{I}),y=y(P_\mathcal{I})\) and \(y=0.\) By convexity \(P_{\mathcal{I}'}\in \cap_{j\in N\setminus \{i\}} IH_j\) and we can induct. \\

    
    
    
    
    
    
    
    

\noindent\textbf{Case 2:} None of the vertices \(A,B,C\) of \(\triangle_i(\mathcal{I})\) lie ``top-left" of \(P_\mathcal{I}.\) 

Now, observe that at least one of \(A,B,C\) must have \(y\)-coordinate which is greater than or equal to \(y(P_\mathcal{I})\) (since the weighted centroid of \(A,B,C\) is \(P_\mathcal{I}\)). WLOG, let this be \(B.\) If we are not in Case 1,  then we must have \(x(B)>x(P_\mathcal{I}),\) i.e. \(B\) is ``top-right" of \(P_\mathcal{I}.\) Similarly, we may assume that \(C\) lies ``bottom-left" of \(P_\mathcal{I}\) (i.e \(y(C)\leq y(P_\mathcal{I})\) and \(x(C)<x(P_\mathcal{I})\)), and WLOG that \(A\) lies under \(BC\) (so \(A,B,C\) are in anti-clockwise order).

Next, define the set \(S:=\{j\in N\setminus \{i\}:\frac{\beta_j}{\alpha_j}\leq \frac{\beta_i}{\alpha_i}\}\) to be the set of agents \(j\in N\) such that the slope of the line \(L_j\) is steeper than that of \(L_i.\) Analogously, we define \(G:=\{j\in N\setminus \{i\}:\frac{\beta_j}{\alpha_j}> \frac{\beta_i}{\alpha_i}\}\) to be the set of agents \(j\in N\) such that the slope of the line \(L_j\) is gentler than that of \(L_i.\) 

Let \(I_{j}\) for \(j\in N\) denote the intersection point of lines \(L_i\) and \(L_j.\) For \(j\in N\), let \(R_j^\text{right}:=\{(x,y)\in R_j: x\geq x(P_\mathcal{I})\}\) and \(R_j^\text{below}:=\{(x,y)\in R_j: y\leq y(P_\mathcal{I})\}.\) We can make the following observation:

\begin{observation}\label{obs:mixed-steep-gentle}
    For all \(j\in S, R_i^\text{right}\subseteq R_j^\text{right},\) and for all \(k\in G, R_i^\text{below}\subseteq R_k^\text{below}\) (see  Fig~\ref{fig:mixed-partial-order-subset}).
\end{observation}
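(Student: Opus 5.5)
The plan is to copy the proof of Observation~\ref{obs:steep-gentle}, with the geometry adjusted to the mixed setting. Here each \(L_j\) is the increasing line \(\alpha_j x-\beta_j y=\MMS_j\), of positive slope \(s_j:=\alpha_j/\beta_j\). The region \(R_j\) consists of the nonnegative points on or below \(L_j\), equivalently on or to the right of it. Write \(y_j(x)\) for the height of \(L_j\) above \(x\). Then \(v_j=y_j(x(P_\mathcal{I}))\), and \(h_j\) is the \(x\)-coordinate where \(L_j\) meets height \(y(P_\mathcal{I})\).

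Agents in \(S\) have \(s_j\ge s_i\) (steeper), and agents in \(G\) have \(s_k<s_i\) (gentler). The basic identity is
\[
y_j(x)-y_i(x)=(s_j-s_i)\bigl(x-x(I_j)\bigr).
\]
Hence a steeper line lies below \(L_i\) to the left of \(I_j\) and above it to the right. A gentler line behaves the opposite way. I also use the fact that \(P_\mathcal{I}\in IH_i\subseteq R_i\), which gives \(y(P_\mathcal{I})\le v_i\) and \(h_i\le x(P_\mathcal{I})\).

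\emph{Steep agents.} Fix \(j\in S\) with \(s_j>s_i\); the parallel case is handled below. I first show \(x(I_j)\le x(P_\mathcal{I})\). Suppose instead \(x(I_j)>x(P_\mathcal{I})\). Then \(v_j<v_i\), since \(L_j\) is below \(L_i\) left of \(I_j\). Also \(y(I_j)=y_i(x(I_j))>y_i(x(P_\mathcal{I}))=v_i\ge y(P_\mathcal{I})\), because \(L_i\) is increasing. At the height \(y(P_\mathcal{I})<y(I_j)\), the line \(L_j\) lies to the right of \(L_i\), which gives \(h_j>h_i\). So \(j\preceq' i\) strictly, contradicting the minimality of \(i\).

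Hence \(x(I_j)\le x(P_\mathcal{I})\), so \(y_j(x)\ge y_i(x)\) for all \(x\ge x(P_\mathcal{I})\). Any \((x,y)\in R_i^\text{right}\) has \(y\le y_i(x)\le y_j(x)\), \(y\ge0\) and \(x\ge x(P_\mathcal{I})\), and therefore lies in \(R_j^\text{right}\).

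\emph{Gentle agents.} Fix \(k\in G\). I show \(y(I_k)\ge y(P_\mathcal{I})\). Suppose instead \(y(I_k)<y(P_\mathcal{I})\). At height \(y(P_\mathcal{I})\), above the intersection, the gentler line \(L_k\) lies to the right of \(L_i\), so \(h_k>h_i\). Also \(x(I_k)<h_i\le x(P_\mathcal{I})\), since \(L_i\) is increasing. So at \(x(P_\mathcal{I})\), to the right of \(I_k\), \(L_k\) is below \(L_i\), giving \(v_k<v_i\). Again \(k\preceq' i\) strictly, a contradiction.

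So for all heights \(y\le y(P_\mathcal{I})\le y(I_k)\), the line \(L_k\) lies weakly to the left of \(L_i\). Every point of \(R_i^\text{below}\) lies on or to the right of \(L_i\) at its height, hence on or to the right of \(L_k\), and so lies in \(R_k^\text{below}\).

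\emph{Main obstacle.} The delicate step is turning "\(L_j\) lies on the wrong side of \(I_j\)" into a strict violation of minimality. This needs both comparisons of \(v\) and \(h\), and the chain through \(P_\mathcal{I}\in R_i\) is exactly what supplies the second comparison. I also need to handle the degenerate cases:
\begin{itemize}
\item If \(s_j=s_i\), the lines are parallel. Minimality forces \(v_i\le v_j\), since otherwise \(L_j\) is strictly lower and \(j\) strictly precedes \(i\). Then \(R_i\subseteq R_j\) outright.
\item If the strict inequalities become equalities, the lines coincide at the relevant point, and the desired inclusions hold trivially.
\end{itemize}
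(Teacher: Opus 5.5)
Your proof is correct and follows the same route as the paper: you contradict the minimality of \(i\) under \(\preceq'\) to place the intersection point \(I_j\) (resp.\ \(I_k\)) on the correct side of \(P_\mathcal{I}\), and then compare the two lines on the relevant half-plane. You also make explicit two things the paper leaves implicit: that \(P_\mathcal{I}\in R_i\), together with the positive slope of \(L_i\), puts \(y(P_\mathcal{I})\) on the correct side of \(y(I_j)\) so that both the \(v\)- and \(h\)-comparisons are strict; and you treat the parallel case \(s_j=s_i\) and the gentle case directly rather than appealing to symmetry.
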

\begin{figure}[htbp]
    \centering
    \begin{subfigure}[t]{0.48\textwidth}
        \centering
        \resizebox{0.8\linewidth}{!}{
        \begin{tikzpicture}[scale=1, >=stealth]

        \draw[->] (0,0) -- (6.5,0) node[right] {$x$};
        \draw[->] (0,0) -- (0,6.5) node[above] {$y$};
        
        \coordinate (P) at (4,1);
        \fill (P) circle (1pt) node[below right]{\(P_\mathcal{I}\)};
        
        \coordinate (I_j) at (3.5,2.5);
        \fill (I_j) circle (1pt) node[above left]{\(I_j\)};
        
        \draw[dashed] (4,0) node[below]{\(x(P)\)} -- (4,6);
        \draw[dashed] (0,1) node[left]{\(y(P)\)} -- (6,1);
        
        \draw[green!70!black, thick] (1,0) -- (6,5) node[pos=0.8, right, green!70!black] {\(L_i\)}; 
        
        \draw[dashed, green!70!black] (0,3) -- (4,3) node[pos=0, left, green!70!black] {\(v_i\)}; 
        
        \draw[dashed, green!70!black] (2,1) -- (2,0) node[pos=1, below, green!70!black] {\(h_i\)}; 
        
        \draw[red, thick] (8/3,0) -- (14/3,6) node[pos=0.8, right, red] {\(L_j\)}; 
        
        \draw[dashed, red] (0,4) -- (4,4) node[pos=0, left, red] {\(v_j\)}; 
        
        \draw[dashed, red] (3,1) -- (3,0) node[pos=1, below, red] {\(h_j\)}; 
        
        \begin{scope}[on background layer]
            \fill[green!15] (4,0) -- (4,3) -- (6,5) -- (6,0) -- cycle;
        \end{scope}
        
        \begin{scope}[on background layer]
            \fill[red!15] (4,3) -- (6,5) -- (6,6) -- (14/3,6) -- (4,4) -- cycle;
        \end{scope}
        
        \end{tikzpicture}
}
    \end{subfigure}
    \hfill
    \begin{subfigure}[t]{0.48\textwidth}
        \centering
        \resizebox{0.8\linewidth}{!}{
        \begin{tikzpicture}[scale=1, >=stealth]
        
        \draw[->] (0,0) -- (6.5,0) node[right] {$x$};
        \draw[->] (0,0) -- (0,6.5) node[above] {$y$};
        
        \coordinate (P) at (4,1);
        \fill (P) circle (1pt) node[below right]{\(P_\mathcal{I}\)};
        
        \draw[dashed] (4,0) node[below]{\(x(P)\)} -- (4,6);
        \draw[dashed] (0,1) node[left]{\(y(P)\)} -- (6,1);
        
        \draw[green!70!black, thick] (1,0) -- (6,5) node[pos=0.8, right, green!70!black] {\(L_i\)}; 
        
        \draw[dashed, green!70!black] (0,3) -- (4,3) node[pos=0, left, green!70!black] {\(v_i\)}; 
        
        \draw[dashed, green!70!black] (2,1) -- (2,0) node[pos=1, below, green!70!black] {\(h_i\)}; 
        
        \draw[blue, thick] (0,0.7) -- (6,0.8*6+0.7) node[pos=0.8, above, blue] {\(L_k\)}; 
        
        \draw[dashed, blue] (0,0.8*4+0.7) -- (4,0.8*4+0.7) node[pos=0, left, blue] {\(v_k\)}; 
        
        \draw[dashed, blue] (0.3/0.8,1) -- (0.3/0.8,0) node[pos=1, below, blue] {\(h_k\)}; 
        
        \begin{scope}[on background layer]
            \fill[green!15] (1,0) -- (2,1) -- (6,1) -- (6,0) -- cycle;
        \end{scope}
        
        \begin{scope}[on background layer]
            \fill[blue!15] (1,0) -- (0,0) -- (0,0.7) -- (0.3/0.8,1) -- (2,1) -- cycle;
        \end{scope}
        
        \end{tikzpicture}
        }
    \end{subfigure}
    \caption{Illustration of Obs~\ref{obs:mixed-steep-gentle}. \(R_i^\text{right}\subseteq R_j^\text{right}\) for \(j\in S\) (left) and \(R_i^\text{below}\subseteq R_k^\text{below}\) for \(k\in G\).}
    \label{fig:mixed-partial-order-subset}
\end{figure}
\begin{proof}
    Let \(j\in S.\) We show \(R_i^\text{right}\subseteq R_j^\text{right}\). Suppose \(x(I_j)>x(P_\mathcal{I}).\) Then, since \(L_j\) is steeper than \(L_i,\) we must have \(v_j<v_i\) and \(h_j>h_i,\) but this contradicts the minimality of \(i\) under \(\preceq'.\) Thus \(x(I_j)\leq x(P_\mathcal{I}).\) Then suppose some point \(R\in R_i^{\text{right}}.\) Since \(R\) lies under \(L_i\), \(L_j\) is steeper than \(L_i\), and \(x(R)\geq x(P_\mathcal{I})\geq x(I_j),\) then \(R\in R_j^{\text{right}}.\) The argument for \(R_i^\text{below}\subseteq R_k^\text{below}\) for any \(k\in G\) is symmetric.
\end{proof}

Note that this may not necessarily imply that when \(x\leq x(P), IH_i\subseteq IH_j,\) or when \(y\leq y(P), IH_i\subseteq IH_k\).

Now, we use the above fact to handle the remaining cases. \\

\noindent\textbf{Case 2a:} \(n_B\geq |S|\)

\begin{figure}[htbp]
    \centering
    \begin{subfigure}[t]{0.48\textwidth}
        \centering
        \resizebox{0.8\linewidth}{!}{
        \begin{tikzpicture}[scale=1, >=stealth]
        
        \draw[->] (0,0) -- (6,0) node[right] {$x$};
        \draw[->] (0,0) -- (0,6) node[above] {$y$};
        
        \coordinate (P) at (3.03,2.2);
        \fill[red] (P) circle (1pt) node[right] {\tiny\(P_{\mathcal{I}}\)};
        
        \draw[green!70!black, thick] (3/2,0) -- (9/2,6) node[pos=0.8, left, green!70!black] {\(L_i\)}; 
        
        \draw[blue, thick] (0,1) -- (6,0.7*6+1) node[pos=0.2, above, blue] {\(L_k\)}; 
        
        \coordinate (A) at (3,1);
        \coordinate (B) at (4.5,5.5);
        \coordinate (C) at (1.7,0.4);
        
        \draw[line width=0.6pt, rounded corners=1pt] (A) -- (B) -- (C) -- cycle;
        \fill (A) circle (1pt) node[right] {\(A\)};
        \fill (B) circle (1pt) node[above] {\(B\)};
        \fill (C) circle (1pt) node[left] {\(C\)};
        
        \coordinate (Z) at (2.3097,0.5929); 
        \fill (Z) circle (1pt) node[right] {\tiny \(Z\)};
        \draw[dashed, line width=0.6pt] (B) -- (Z);
        
        \end{tikzpicture}
        }
        \caption{\(y(A)\leq y(P_\mathcal{I}).\) Red point is \(P_\mathcal{I}.\) \(Z\) lies on \(\overline{AC}\) and \(P_{\mathcal{I}'}\) lies on \(\overline{P_{\mathcal{I}}Z}\).}
    \end{subfigure} \hfill
    \begin{subfigure}[t]{0.48\textwidth}
        \centering
        \resizebox{0.8\linewidth}{!}{
        \begin{tikzpicture}[scale=1, >=stealth]

        \draw[->] (0,0) -- (6,0) node[right] {$x$};
        \draw[->] (0,0) -- (0,6) node[above] {$y$};
        
        \coordinate (P) at (3,2.2);
        \fill[red] (P) circle (1pt) node[right] {\tiny\(P_{\mathcal{I}}\)};
        
        \draw[green!70!black, thick] (3/2,0) -- (9/2,6) node[pos=0.8, left, green!70!black] {\(L_i\)}; 
        
        \draw[blue, thick] (0,1) -- (6,0.7*6+1) node[pos=0.2, above, blue] {\(L_k\)}; 
        
        \coordinate (A) at (6,5.5);
        \coordinate (B) at (4.5,5.5);
        \coordinate (C) at (1.7,0.4);
        
        \draw[line width=0.6pt, rounded corners=1pt] (A) -- (B) -- (C) -- cycle;
        \fill (A) circle (1pt) node[right] {\(A\)};
        \fill (B) circle (1pt) node[above] {\(B\)};
        \fill (C) circle (1pt) node[left] {\(C\)};
        
        \coordinate (Z) at (2.7311,1.6236); 
        \fill (Z) circle (1pt) node[right] {\tiny \(Z\)};
        \draw[dashed, line width=0.6pt] (B) -- (Z);
        
        \begin{scope}[on background layer]
            \fill[green!15] (P) -- (3,0) -- (1.7,0) -- (C) -- cycle;
        \end{scope}

        \draw[dashed, line width=0.6pt] (P) -- (3,0) node[below] {\tiny\(x(P_\mathcal{I})\)};
        \draw[dashed, line width=0.6pt] (P) -- (C);
        \draw[dashed, line width=0.6pt] (C) -- (1.7,0) node[below] {\tiny\(x(C)\)};
        
        \end{tikzpicture}
        }
        \caption{\(y(A)>y(P_\mathcal{I}).\) Red point is \(P_\mathcal{I}\). \(P_{\mathcal{I}'}\) lies within region bounded by \((x(P_\mathcal{I}),0),P_\mathcal{I},C,(x(C),0).\))}
    \end{subfigure}
    \caption{Illustration of Case 2a.}
    \label{fig:mixed-case-2a}
\end{figure}

\emph{Allocation.} Assign each agent in \(S\) the bundle \(B.\) The remaining instance is \(\mathcal{I}'=(a-|S|\cdot x(B),b-|S|\cdot y(B),N\setminus S).\) 

\emph{Correctness.} See Fig~\ref{fig:mixed-case-2a}. By Observation~\ref{obs:mixed-steep-gentle}, the bundles are \(\MMS\)-feasible for all \(j\in S.\) Let the remaining instance be \(\mathcal{I}'=(a-|S|\cdot x(B),b-|S|\cdot y(B),N\setminus S)\) and note that \(N\setminus S = G\cup \{i\}.\) The new feasibility point \(P_{\mathcal{I}'}\) must lie along the segment \(\overline{P_\mathcal{I} Z}\) where \(Z = BP_\mathcal{I} \cap AC\) and ``bottom-left" of \(P_\mathcal{I}\). Now since \(P_{\mathcal{I}'}\) is still a weighted centroid of \(A,B,C\in IH_i,\) then \(P_{\mathcal{I}'}\in IH_i.\) 

\noindent Fix \(k\in G.\) By assumption, since \(y(C)\leq y(P_\mathcal{I}),\) by Observation~\ref{obs:mixed-steep-gentle}, we have \(C\in R_i^\text{below}\subseteq R_k^\text{below}\subseteq R_k.\) Since \(C\in\mathbb{Z}^2,\) then \(C\in IH_k.\) 

\noindent Suppose \(y(A) \leq y(P_\mathcal{I})\). Then by the same argument \(A\in IH_k.\) Hence since \(Z\) lies along \(\overline{AC}\), by convexity, \(Z\in IH_k\). 

\noindent Otherwise, suppose \(y(A) > y(P_\mathcal{I})\). Then observe that \(Z\) lies within the polygon defined by the points \((x(P_\mathcal{I}),0),P_\mathcal{I},C,(x(C),0).\) Since \(C\in IH_k,\) then \((x(C),0)\in IH_k.\) Also, by assumption, \(P_\mathcal{I}\in IH_k\) and hence \((x(P_\mathcal{I}),0)\in IH_k.\) Thus by convexity \(Z\in IH_k.\)

\noindent In either case, we have \(Z\in \cap_{k\in N\setminus S}IH_k\), hence \(P_{\mathcal{I}'}\in \cap_{k\in N\setminus S}IH_k\) by the fact that \(Z, P_{\mathcal{I}}\in \cap_{k\in N\setminus S}IH_k\) and the convexity of \(\cap_{k\in N\setminus S}IH_k\), and we may induct. \\

\newpage
\noindent\textbf{Case 2b:} \(n_C\geq |G|.\)

\begin{figure}[htbp]
    \centering
    \begin{subfigure}[t]{0.48\textwidth}
        \centering
        \resizebox{0.8\linewidth}{!}{
        \begin{tikzpicture}[scale=1, >=stealth]
        
        \draw[->] (0,0) -- (6,0) node[right] {$x$};
        \draw[->] (0,0) -- (0,6) node[above] {$y$};
        
        \coordinate (P) at (3.7,2.7);
        \fill[red] (P) circle (1pt) node[right] {\tiny\(P_{\mathcal{I}}\)};
        
        \draw[green!70!black, thick] (0,1) -- (6,0.6*6+1) node[pos=0.2, above, green!70!black] {\(L_i\)}; 
        
        \draw[red, thick] (2,0) -- (5,6) node[pos=0.8, left, red] {\(L_j\)}; 
        
        \coordinate (A) at (6,3);
        \coordinate (B) at (5,4);
        \coordinate (C) at (2,1.5);
        
        \draw[line width=0.6pt, rounded corners=1pt] (A) -- (B) -- (C) -- cycle;
        \fill (A) circle (1pt) node[right] {\(A\)};
        \fill (B) circle (1pt) node[above] {\(B\)};
        \fill (C) circle (1pt) node[left] {\(C\)};
        
        \coordinate (Z) at (5.22414,3.77586); 
        \fill (Z) circle (1pt) node[right] {\tiny \(Z\)};
        \draw[dashed, line width=0.6pt] (C) -- (Z);
        
        \end{tikzpicture}
        }
        \caption{\(y(A)\geq y(P_\mathcal{I}).\) Red point is \(P_\mathcal{I}.\) \(Z\) lies on \(\overline{AB}\) and \(P_{\mathcal{I}'}\) lies on \(\overline{P_{\mathcal{I}}Z}\).}
    \end{subfigure} \hfill
    \begin{subfigure}[t]{0.48\textwidth}
        \centering
        \resizebox{0.8\linewidth}{!}{
        \begin{tikzpicture}[scale=1, >=stealth]

        \draw[->] (0,0) -- (6,0) node[right] {$x$};
        \draw[->] (0,0) -- (0,6) node[above] {$y$};
        
        \coordinate (P) at (3.7,2.7);
        \fill[red] (P) circle (1pt) node[below] {\tiny\(P_{\mathcal{I}}\)};
        
        \draw[green!70!black, thick] (0,1) -- (6,0.6*6+1) node[pos=0.2, above, green!70!black] {\(L_i\)}; 
        
        \draw[red, thick] (2,0) -- (5,6) node[pos=0.8, left, red] {\(L_j\)}; 
        
        \coordinate (A) at (2,0.2);
        \coordinate (B) at (5,4);
        \coordinate (C) at (2,1.5);
        
        \draw[line width=0.6pt, rounded corners=1pt] (A) -- (B) -- (C) -- cycle;
        \fill (A) circle (1pt) node[right] {\(A\)};
        \fill (B) circle (1pt) node[above] {\(B\)};
        \fill (C) circle (1pt) node[left] {\(C\)};
        
        \coordinate (Z) at (4.31818,3.15035); 
        \fill (Z) circle (1pt) node[right] {\tiny \(Z\)};
        \draw[dashed, line width=0.6pt] (C) -- (Z);
        
        \begin{scope}[on background layer]
            \fill[green!15] (P) -- (3.7,0) -- (5,0) -- (B) -- cycle;
        \end{scope}

        \draw[dashed, line width=0.6pt] (P) -- (3.7,0) node[below] {\tiny\(x(P_\mathcal{I})\)};
        \draw[dashed, line width=0.6pt] (P) -- (B);
        \draw[dashed, line width=0.6pt] (B) -- (5,0) node[below] {\tiny\(x(B)\)};
        
        \end{tikzpicture}
        }
        \caption{\(y(A)<y(P_\mathcal{I}).\) Red point is \(P_\mathcal{I}\). \(P_{\mathcal{I}'}\) lies within region bounded by \((x(P_\mathcal{I}),0),P_\mathcal{I},B,(x(B),0).\)}
    \end{subfigure}
    \caption{Illustration of Case 2b.}
    \label{fig:mixed-case-2b}
\end{figure}

\emph{Allocation.} Assign each agent in \(G\) the bundle \(C.\) The remaining instance is \(\mathcal{I}'=(a-|G|\cdot x(C),b-|G|\cdot y(C),N\setminus G).\) 

\emph{Correctness.} See Fig~\ref{fig:mixed-case-2b}. By Observation~\ref{obs:mixed-steep-gentle}, the bundles are \(\MMS\)-feasible for all \(j\in G.\) Let the remaining instance be \(\mathcal{I}'=(a-|G|\cdot x(C),b-|G|\cdot y(C),N\setminus G)\) and note that \(N\setminus G = S\cup \{i\}.\) The new feasibility point \(P_{\mathcal{I}'}\) must lie along the segment \(\overline{P_\mathcal{I} Z}\) where \(Z = CP_\mathcal{I} \cap AB\) and ``top-right" of \(P_\mathcal{I}\). Now since \(P_{\mathcal{I}'}\) is still a weighted centroid of \(A,B,C\in IH_i,\) then \(P_{\mathcal{I}'}\in IH_i.\) 

\noindent Fix \(j\in S.\) By assumption, since \(y(B)\geq y(P_\mathcal{I}),\) by Observation~\ref{obs:mixed-steep-gentle}, we have \(B\in R_i^\text{right}\subseteq R_j^\text{right}\subseteq R_j.\) Since \(B\in\mathbb{Z}^2,\) then \(B\in IH_j.\) 

\noindent Suppose \(y(A) \geq y(P_\mathcal{I})\). Then by the same argument \(A\in IH_j.\) Hence since \(Z\) lies along \(\overline{AC}\), by convexity, \(Z\in IH_j\). 

\noindent Otherwise, suppose \(y(A) < y(P_\mathcal{I})\). Then observe that \(Z\) lies within the region bounded by the points \((x(P_\mathcal{I}),0),P_\mathcal{I},B,(x(B),0).\) 

\((x(P_\mathcal{I}),0),P_\mathcal{I},C,(x(C),0).\) Since \(B\in IH_j,\) then \((x(B),0)\in IH_j.\) Also, by assumption, \(P_\mathcal{I}\in IH_j\) and hence \((x(P_\mathcal{I}),0)\in IH_j.\) Thus by convexity \(Z\in IH_j.\)

\noindent In either case, we have \(Z\in \cap_{j\in N\setminus G}IH_j\), hence \(P_{\mathcal{I}'}\in \cap_{j\in N\setminus G}IH_j\) by the fact that \(Z, P_{\mathcal{I}}\in \cap_{j\in N\setminus G}IH_j\) and the convexity of \(\cap_{j\in N\setminus G}IH_j\), and we may induct. \\

\noindent\textbf{Case 2c:} \(n_B<|S|, n_C<|G|\) and \(A\) lies ``bottom-right" of \(P_\mathcal{I}\) 

\begin{figure}[htbp]
    \centering
    \resizebox{0.4\linewidth}{!}{
    \begin{tikzpicture}[scale=1, >=stealth]
    
    \draw[->] (0,0) -- (6,0) node[right] {$x$};
    \draw[->] (0,0) -- (0,6) node[above] {$y$};
    
    \coordinate (P) at (3,4);
    \fill (P) circle (1pt) node[below left]{\(P_\mathcal{I}\)};
    
    \draw[dashed] (6,4) -- (0,4) node[left]{\(y(P_\mathcal{I})\)};
    \draw[dashed] (3,6) -- (3,0) node[below]{\(x(P_\mathcal{I})\)};
    
    \coordinate (O) at (0,0);
    \fill (O) circle (1pt) node[below left]{\(O\)};
    
    \begin{scope}[on background layer]
        \fill[green!15] (P) -- (6,4) -- (6,0) -- (3,0) -- cycle;
    \end{scope}
    
    \coordinate (B) at (3.5,5);
    \fill (B) circle (1pt) node[left] {\(B\)};
    
    \coordinate (A) at (4,3);
    \fill (A) circle (1pt) node[right] {\(A\)};
    
    \coordinate (C) at (1,2.7);
    \fill (C) circle (1pt) node[below] {\(C\)};
    
    \draw (A) -- (B) -- (C) -- cycle;
    
    \end{tikzpicture}
    }
    \caption{(Case 2c) \(A\) lies ``bottom-right" of \(P_\mathcal{I}\) hence is MMS-feasible for all \(i\in N.\)} \label{fig:mixed-case-2c}
\end{figure}

\emph{Allocation.} Assign all \(n_B\) bundles \(B\) to agents in \(S,\) and all \(n_C\) bundles \(C\) to agents in \(G.\)

\emph{Correctness.} See Fig~\ref{fig:mixed-case-2c}. By Observation~\ref{obs:mixed-steep-gentle}, these are MMS-feasible assignments. 

Now since \(A\) is ``bottom-right" of \(P_\mathcal{I},\) and \(P_\mathcal{I}\) lies under \(L_j\) for all \(j\in N,\) i.e. \(A\) is \(\MMS\)-feasible for all agents \(j\in N\). Hence we may simply assign \(A\) to any remaining agent, and since \(n_A\cdot A + n_B\cdot B + n_C\cdot C=(a,b)\) (by definition), this is a complete allocation and an \(\MMS\)-feasible one. \\

\newpage
\noindent\textbf{Case 2d:} \(n_B<|S|, n_C<|G|\) and \(A\) lies ``top-right" of \(P_\mathcal{I}.\)

\begin{figure}[htbp]
    \centering
    \resizebox{0.4\linewidth}{!}{
        \begin{tikzpicture}[scale=1, >=stealth]

        \draw[->] (0,0) -- (6,0) node[right] {$x$};
        \draw[->] (0,0) -- (0,6) node[above] {$y$};
        
        \coordinate (P) at (3,2.2);
        \fill[red] (P) circle (1pt) node[right] {\tiny\(P_{\mathcal{I}}\)};
        
        \draw[green!70!black, thick] (3/2,0) -- (9/2,6) node[pos=0.8, left, green!70!black] {\(L_i\)}; 
        
        \draw[blue, thick] (0,1) -- (6,0.7*6+1) node[pos=0.2, above, blue] {\(L_k\)}; 
        
        \coordinate (A) at (6,5.5);
        \coordinate (B) at (4.5,5.5);
        \coordinate (C) at (1.7,0.4);
        
        \draw[line width=0.6pt, rounded corners=1pt] (A) -- (B) -- (C) -- cycle;
        \fill (A) circle (1pt) node[right] {\(A\)};
        \fill (B) circle (1pt) node[above] {\(B\)};
        \fill (C) circle (1pt) node[left] {\(C\)};
        
        \coordinate (Z) at (2.7311,1.6236); 
        \fill (Z) circle (1pt) node[right] {\tiny \(Z\)};
        \draw[dashed, line width=0.6pt] (B) -- (Z);
        
        \begin{scope}[on background layer]
            \fill[green!15] (P) -- (3,0) -- (1.7,0) -- (C) -- cycle;
        \end{scope}

        \draw[dashed, line width=0.6pt] (P) -- (3,0) node[below] {\tiny\(x(P_\mathcal{I})\)};
        \draw[dashed, line width=0.6pt] (P) -- (C);
        \draw[dashed, line width=0.6pt] (C) -- (1.7,0) node[below] {\tiny\(x(C)\)};
        
        \end{tikzpicture}
        }
        \caption{(Case 2d) - same diagram as Fig~\ref{fig:mixed-case-2a} (b), but different procedure. \(P_{\mathcal{I}'}\) now lies along \(\overline{ZC}.\)} \label{fig:mixed-case-2d}
\end{figure}

\emph{Allocation.} Assign \(n_B\) agents in \(S\) the bundle \(B.\) Assign \(|S|-n_B\) remaining agents in \(S\) the bundle \(A.\) The remaining instance is \(\mathcal{I}'=(a-n_B\cdot x(B)-(|S|-n_B)\cdot x(A),b-n_B\cdot y(B)-(|S|-n_B)\cdot y(A),N\setminus S)\)

\emph{Correctness.} See Fig~\ref{fig:mixed-case-2d}. Note that there are enough bundles since \(n_A+n_B+n_C=|S|+|G|+1\) and \(n_C<|G|\) implies \(n_A+n_B>|S|.\) By Observation~\ref{obs:mixed-steep-gentle}, these are MMS-feasible for all \(j\in S.\) Let the remaining instance be \(\mathcal{I}'=(a-n_B\cdot x(B)-(|S|-n_B)\cdot x(A),b-n_B\cdot y(B)-(|S|-n_B)\cdot y(A),N\setminus S)\) and note that \(N\setminus S=G\cup \{i\}.\) Let \(Z:=\frac{n_A\cdot A+n_C\cdot C}{n_A+n_C}\) be the weighted centroid of the remaining points after removing bundles \(B.\) This must lie along the segment \(\overline{AC},\) and is ``bottom-left" of \(P_\mathcal{I}\). Now upon removing \(|S|-n_B\) copies of \(A,\) the new feasibility point \(P_{\mathcal{I}'}\) must lie along the segment \(\overline{ZC},\) and is still ``bottom-left" of \(P_\mathcal{I}.\) Thus, like in Case 2a, this is within the polygon defined by the points \((x(P_\mathcal{I}),0),P_\mathcal{I},C,(x(C),0).\) Then by the same argument \(P_{\mathcal{I}'}\in \cap_{k\in N\setminus S} IH_k(D_k)\) and we may induct. \\

\noindent\textbf{Case 2e:} \(n_B<|S|, n_C<|G|\) and \(A\) lies ``bottom-left" of \(P_\mathcal{I}.\)

\begin{figure}[htbp]
    \centering
        \resizebox{0.4\linewidth}{!}{
        \begin{tikzpicture}[scale=1, >=stealth]

        \draw[->] (0,0) -- (6,0) node[right] {$x$};
        \draw[->] (0,0) -- (0,6) node[above] {$y$};
        
        \coordinate (P) at (3.7,2.7);
        \fill[red] (P) circle (1pt) node[below] {\tiny\(P_{\mathcal{I}}\)};
        
        \draw[green!70!black, thick] (0,1) -- (6,0.6*6+1) node[pos=0.2, above, green!70!black] {\(L_i\)}; 
        
        \draw[red, thick] (2,0) -- (5,6) node[pos=0.8, left, red] {\(L_j\)}; 
        
        \coordinate (A) at (2,0.2);
        \coordinate (B) at (5,4);
        \coordinate (C) at (2,1.5);
        
        \draw[line width=0.6pt, rounded corners=1pt] (A) -- (B) -- (C) -- cycle;
        \fill (A) circle (1pt) node[right] {\(A\)};
        \fill (B) circle (1pt) node[above] {\(B\)};
        \fill (C) circle (1pt) node[left] {\(C\)};
        
        \coordinate (Z) at (4.31818,3.15035); 
        \fill (Z) circle (1pt) node[right] {\tiny \(Z\)};
        \draw[dashed, line width=0.6pt] (C) -- (Z);
        
        \begin{scope}[on background layer]
            \fill[green!15] (P) -- (3.7,0) -- (5,0) -- (B) -- cycle;
        \end{scope}

        \draw[dashed, line width=0.6pt] (P) -- (3.7,0) node[below] {\tiny\(x(P_\mathcal{I})\)};
        \draw[dashed, line width=0.6pt] (P) -- (B);
        \draw[dashed, line width=0.6pt] (B) -- (5,0) node[below] {\tiny\(x(B)\)};
        
        \end{tikzpicture}
        }
    \caption{(Case 2e) - same diagram as Fig~\ref{fig:mixed-case-2b} (b), but different procedure. \(P_{\mathcal{I}'}\) now lies along \(\overline{ZB}.\)}\label{fig:mixed-case-2e}
\end{figure}

\emph{Allocation.} Assign \(n_C\) agents in \(G\) the bundle \(A.\) Assign \(|G|-n_C\) remaining agents in \(G\) the bundle \(A.\) The remaining instance is \(\mathcal{I}'=(a-n_C\cdot x(C)-(|G|-n_C)\cdot x(A),b-n_C\cdot y(C)-(|G|-n_C)\cdot y(A),N\setminus G)\)

\emph{Correctness.} See Fig~\ref{fig:mixed-case-2e}. Note that there are enough bundles since \(n_A+n_B+n_C=|S|+|G|+1\) and \(n_B<|S|\) implies \(n_A+n_C>|G|.\) By Observation~\ref{obs:mixed-steep-gentle}, these are MMS-feasible for all \(k\in S.\) Let the remaining instance be \(\mathcal{I}'=(a-n_C\cdot x(C)-(|G|-n_C)\cdot x(A),b-n_C\cdot y(C)-(|G|-n_C)\cdot y(A),N\setminus G)\) and note that \(N\setminus G=S\cup \{i\}.\) Let \(Z:=\frac{n_A\cdot A+n_B\cdot B}{n_A+n_B}\) be the weighted centroid of the remaining points after removing bundles \(C.\) This must lie along the segment \(\overline{AC},\) and is ``top-right" of \(P_\mathcal{I}\). Now upon removing \(|G|-n_C\) copies of \(A,\) the new feasibility point \(P_{\mathcal{I}'}\) must lie along the segment \(\overline{ZB},\) and is still ``top-right" of \(P_\mathcal{I}.\) Thus, like in Case 2b, this is within the polygon defined by the points \((x(P_\mathcal{I}),0),P_\mathcal{I},B,(x(B),0).\) Then by the same argument \(P_{\mathcal{I}'}\in \cap_{j\in N\setminus S} IH_j(D_j)\) and we may induct. \\

\end{proof}

\textbf{Time complexity.} By an application of Lemma~\ref{lemma:alg-time-general} (see Theorem 6.1 in \cite{mccormick_polynomial_1997}), the values \((\MMS_j)_{j\in N}\) may be computed in polynomial time. In each iteration, we may find the minimal agent \(i\in N\) under \(\preceq\) in \(O(n)\) time. Computing \(\triangle_i(\mathcal{I})\) also takes \(O(\log^2 \MMS_i)\) time by Lemma~\ref{lemma:alg-time-general}, and the number of agents to consider after each iteration decreases by at least one. Thus the allocation procedure can be done in polynomial time.



\end{document}